\newtheorem{definition}{Definition}[]
\newtheorem{proposition}{Proposition}[]
\newtheorem{theorem}{Theorem}[]
\newtheorem{corollary}{Corollary}[theorem]
\newtheorem{lemma}[]{Lemma}
\newcommand{\ts}{\textsuperscript}
\DeclareMathOperator{\E}{\mathbb{E}}
\def\x{{\mathbf x}}
\def\a{{\mathbf a}}
\def\A{{\mathbf A}}
\def\E{{\mathbb E}}
\def\P{{\mathbf P}}
\def\r{{\mathbf r}}
\def\u{{\mathbf u}}
\def\y{{\mathbf y}}
\def\I{{\mathbf I}}
\def\e{{\boldsymbol{\nu}}}
\def\Rb{{\mathbb{R}}}
\def\C{{\mathbf C}}
\def\O{{\cal O}}
\def\S{{\cal S}}
\def\T{{\cal T}}
\def\Xs{{\cal X}}
\def\R{{\cal R}}
\def\N{{\cal N}}
\def\Z{{\cal Z}}
\def\B{{\cal B}}
\def\As{{\cal A}}
\def\Ps{{\cal P}}
\def\C{{\cal C}}
\let\oldref\ref
\renewcommand{\ref}[1]{(\oldref{#1})}
\newcommand{\RNum}[1]{\uppercase\expandafter{\romannumeral #1\relax}}
\newcolumntype{C}{>{\centering\arraybackslash}b{\widthof{positions}}}
\newcolumntype{d}{D{.}{.}{-2}}
\renewcommand{\fnum@figure}{Fig.~\thefigure}
\begin{document}
\title{Performance-Complexity Tradeoffs in Greedy Weak Submodular Maximization with Random Sampling}
\author{Abolfazl~Hashemi$^\dagger$, Haris~Vikalo, and Gustavo de Veciana
\thanks{
	Abolfazl Hashemi is with the School of Electrical and Computer Engineering, Purdue University. Haris Vikalo and Gustavo de Veciana are with the Department of Electrical and Computer Engineering, University of Texas at Austin. $^\dagger$Work done while at the Department of Electrical and Computer Engineering, The University of Texas at Austin.
This work was supported in part by NSF grant ECCS-1809327. Part of the results in this paper were presented at the IEEE International Conference on Acoustics, Speech, and Signal
Processing,Toronto, Ontario, Canada, June 2021 \cite{hashemi2021performance}.}}
\maketitle
\begin{abstract}
Many problems in signal processing and machine learning can be formalized as weak submodular optimization tasks. For such problems, a simple greedy algorithm (\textsc{Greedy}) is guaranteed to find a solution achieving the objective with a value no worse than $1-e^{-1/c}$ of the optimal, where $c$ is the multiplicative weak-submodularity constant. Due to the high cost of querying large-scale systems, the complexity of \textsc{Greedy} becomes prohibitive in contemporary applications. In this work, we study the tradeoff between performance and complexity when one resorts to random sampling strategies to reduce the query complexity of \textsc{Greedy}.  Specifically, we quantify the effect of uniform sampling strategies on \textsc{Greedy}'s performance through two metrics: (i) probability of identifying an optimal subset, and (ii) suboptimality with respect to the optimal solution. The latter implies that uniform sampling strategies with a fixed sampling size achieve a non-trivial approximation factor; however, we show that with overwhelming probability, these methods fail to find the optimal subset. Our analysis shows that the failure of uniform sampling strategies with fixed sample size can be circumvented by successively increasing the size of the search space. Building upon this insight, we propose a simple progressive stochastic greedy algorithm and study its approximation guarantees. Moreover, we demonstrate effectiveness of the proposed method in dimensionality reduction applications and feature selection tasks for clustering and object tracking.
\end{abstract}
\begin{IEEEkeywords}
weak submodular optimization, greedy algorithms, randomized algorithms, subset selection,
\end{IEEEkeywords}
\section{Introduction}\label{sec:intro}
We study the problem of maximizing non-decreasing weak submodular functions under a cardinality constraint in large-scale settings.  The well-known \textsc{Greedy} algorithm \cite{nemhauser1978analysis} selects the solution set by sequentially identifying elements with the largest marginal contribution; the algorithm achieves a $1-1/e$ worst-case approximation \cite{nemhauser1978analysis}, the tightest guarantee for any algorithm that can evaluate the  objective function on only polynomially many inputs. Although \textsc{Greedy} achieves the optimal approximation factor, the computational cost of doing so is expensive for large-scale problems. This motivates the search for approximation schemes capable of accelerating the optimization without significant sacrifice of accuracy.

Recently, greedy algorithms that utilize random sampling while restricting size of the searched set to a fixed value have been proposed; for a problem with cardinality constraint $k$ and the ground set of size $m$, such methods incur complexity of only $\O(m\log\frac{1}{\epsilon})$  \cite{mirzasoleiman2015lazier,elenberg2016restricted,hashemi2017randomized,hashemi2020randomized}. In expectation, the fixed search size methods achieve a constant factor approximation of $1-1/e-\epsilon$, nearly matching the worst-case performance guarantee of \textsc{Greedy} while providing a computational gain of $\O(\frac{k}{\log\frac{1}{\epsilon}})$. Motivated by the success of greedy search schemes in practical settings, in this paper we investigate the impact of the search space size on their performance and study performance-complexity tradeoffs in maximizing a (weak) submodular function. To this end, we consider two criteria: (1) the ability of greedy algorithms with uniform sampling to exactly identify the optimal solution to a (weak) submodular maximization problem, and (2) the degree of suboptimality of the selected solution with respect to the optimal value.

Our first contribution, formalized in Theorem \ref{thm:fail}, is the somewhat surprising result that as the size of the ground set and cardinality constraint increase, randomized greedy schemes with a restricted search space with overwhelming probability fail to successfully identify the optimal subset. This in turn implies that, while there may be scenarios where \textsc{Greedy} can identify the optimal subset with high-probability (e.g., the task of sparse recovery), there is an unbounded gap between the exact identification capability of \textsc{Greedy} and randomized schemes with a restricted search space.

Aiming to overcome the above limitation, as part of our next contribution we establish that having an increasing schedule of sampling set sizes is unavoidable. Building on this insight, we propose a new algorithm that we refer to as {\it Progressive Stochastic Greedy} (\textsc{Psg}) and analyze its achievable performance. In particular, we show that \textsc{Psg} attains improved worst-case approximation factor, both on expectation and with high probability, compared to randomized greedy schemes with a fixed sampling size. 

Finally, we consider the application of the proposed scheme to two subset selection problems, namely, column subset selection for subspace clustering and observation selection for extended object tracking in automated driving. Our results demonstrate the efficacy of the proposed scheme in reducing the computational costs of subset selection with negligible performance drop.

\subsection{Related Work}
Submodularity is a property of set functions with desirable theoretical and practical implications relevant to many problems in combinatorial optimization. For instance, submodular maximization applies to many well-known problems such as facility location, coverage problems, and maximum weighted matching in discrete optimization \cite{williamson2011design} as well as active learning, influence maximization, and information gathering in machine learning \cite{kempe2003maximizing,krause2008near,guillory2012active}. In such problems, the goal is to maximize a monotonically increasing submodular function subject to a linear matroid, or a cardinality constraint. 

The objective function in some applications, e.g., sparse support selection and observation selection \cite{tropp2007signal,joshi2009sensor,das2011submodular,elenberg2016restricted}, is not necessarily a submodular function; rather, one deals with {\it weakly} submodular objectives that resemble diminishing return property of submodular functions. 

Recent advances in information systems have brought forth unprecedented amounts of data in many settings, including contemporary weak submodular maximization problems. Given a cardinality constraint $k$ and a ground set of size $m$, the classical \textsc{Greedy} algorithm for monotone weak submodular maximization that enjoys an optimal $1-1/e$ constant factor approximation \cite{nemhauser1978analysis} requires $\O(mk)$ function evaluations. Therefore, in data intensive applications where function evaluation is expensive, running \textsc{Greedy} may be infeasible. To this end, there have been recent efforts to exploit strong theoretical guarantees of \textsc{Greedy} while improving on its complexity via resorting to either distributed and parallel computing schemes, or methods for reducing the cost-per-iteration of \textsc{Greedy}. Among the former, there is a growing line of work to design algorithms with sublinear {\it adaptivity} \cite{mirzasoleiman2013distributed,balkanski2018adaptive,ene2019submodular,fahrbach2018submodular}. The concept of adaptivity is heavily studied in computer science and optimization; informally, adaptivity characterizes efficiency of parallel computation of an algorithm. The focus of this paper, however, is on the latter, i.e., centralized schemes -- distributed weak submodular maximization methods are complementary to our study. Nevertheless, our analysis and the proposed algorithm may be deployed to aid distributed methods that rely on \textsc{Greedy}, potentially extending their utility.

The \textsc{Lazy-Greedy} algorithm \cite{minoux1978accelerated}  exploits the notion of  submodularity to decrease the number of function evaluations of each iteration of \textsc{Greedy} without sacrificing its performance. However, similar to \textsc{Greedy}, \textsc{Lazy-Greedy} incurs $\O(mk)$ function evaluations. Moreover, it cannot be employed in weak submodular maximization problems. More recently, Badanidiyuru and Vondrak \cite{badanidiyuru2014fast} proposed a randomized scheme that achieves a worst case approximation factor of $1-1/e-\epsilon$ while using $\O(\frac{m}{\epsilon}\log\frac{m}{\epsilon})$ evaluations. Motivated by this work, Mirzasoleyman et al. \cite{mirzasoleiman2015lazier} proposed \textsc{Stochastic-Greedy} that achieves a worst case approximation factor of $1-1/e-\epsilon$ while using  $\O(m\log\frac{1}{\epsilon})$ function evaluations. Further discussion on greedy algorithms with random sampling is deferred to Section \ref{sec:grs}. To our knowledge, no prior works study the effect of the sampling size on the success probability of the greedy algorithms with random sampling.

Sparse reconstruction and sparse support selection tasks belong to a class of cardinality-constrained weak submodular maximization problems where the exact identification of the optimal subset is of critical importance. In sparse support selection, the goal is to identify the support of a high dimensional vector (e.g., an image or a signal), i.e., the collection of nonzero components of the data, from a relatively small number of  measurements. In such settings, \textsc{Greedy} satisfies a general constant factor approximation guarantee as shown by \cite{nemhauser1978analysis,das2011submodular}. However, by exploiting the underlying structural properties of the measurement model in sparse support selection, one can establish conditions under which \textsc{Greedy} {\it exactly} identifies the optimal subset. To this end, necessary and sufficient conditions for exact identification via
\textsc{Greedy} have been established by relying on various analysis techniques including those based on restricted isometry 
\cite{zhang2011sparse,davenport2010analysis,mo2012remark} and mutual incoherence
\cite{tropp2004greed,cai2011orthogonal,zhang2009consistency} properties. In particular, when the measurements are randomly generated, Tropp and Gilbert \cite{tropp2007signal} show that  \textsc{Greedy} enjoys an optimal sample complexity bound outlined by \cite{candes2005decoding}. In the supplementary, we explore an application of the proposed algorithm to this task.

A related task of column subset selection (CSS) has received considerable attention in recent years due to its broad applicability, interpretability, and provably-guaranteed performance \cite{tropp2009column,guruswami2012optimal,boutsidis2014near}. CSS is a constrained low-rank-approximation problem that seeks to approximate a data matrix (e.g., a matrix having data points in its rows and features in its columns) by projecting it onto a space spanned by only a few of its columns. While similar to the general low-rank approximation problem, CSS possesses certain distinguishing characteristics. First, since CSS is an unsupervised
method and does not require labeled data, it can be applied efficiently to the scenarios where labeled data is sparse while unlabeled data is abundant. Second, to learn interpretable models in applications where the decision is made via a data-driven algorithm (e.g., hiring and education), it is of critical importance to keep the semantic interpretation of the features intact. This can be ensured by selecting a subset of available features as opposed to generating new features via an arbitrary function of the input features. Finally, compared to PCA or other methods that require a matrix-matrix multiplication to project input features into a reduced space during inference time, solution of CSS feature selection problem can be applied efficiently during inference as CSS only requires selecting a subset of feature values from a new instance vector.

An efficient approach to CSS relies on a greedy algorithm that selects the most representative subset of columns in an iterative fashion by greedily optimizing the reconstruction error \cite{farahat2013distributed,farahat2015greedy}. The iterative procedure of the greedy scheme is readily implemented in practice, and its often strong performance is complemented with rigorous theoretical guarantees \cite{civril2012column,altschuler2016greedy}. However, running the greedy scheme can be computationally expensive for large datasets. This is because if the goal is to select $k$ out of the $m$ available columns, in each of $k$ iterations of the greedy scheme one needs to evaluate marginal contribution of $\O(n)$ columns. Although computational costs can be reduced using the so-called lazy evaluations \cite{krause2014submodular}, the worst case number of function evaluations of  the greedy scheme is $\O(nk)$. In our work, we apply the proposed \textsc{Psg} algorithm to improve the computational efficiency of greedy CSS schemes.
\subsection{Organization}
The rest of the paper is organized as follows. In Section \ref{sec:back}, we introduce the notation and review relevant concepts from weak submodular maximization.  In Section \ref{sec:fail}, we study the conditions for successful identifications of the optimal subset via uniform sampling strategies. In particular, we present  our first contribution which establishes the failure of methods with a fixed sampling size to exactly identify the optimal subset, and argue the necessity of having an increasing schedule of sampling sizes; this insight leads to the proposed \textsc{Psg} algorithm. Section \ref{sec:alg}  presents the analysis of \textsc{Psg} and provides bounds on its suboptimality performance. In Section \ref{sec:sim}, we consider applications of \textsc{Psg} in subset selection problems. Finally, concluding remarks are provided in Section \ref{sec:conc}.
\section{Background}\label{sec:back}
In this section, we introduce the notation and provide an overview of relevant concepts from submodular optimization as well as the \textsc{Greedy} algorithm and methods with uniform sampling.
\subsection{Notation}
Italic letters represent scalars and numerical constants, e.g., $\alpha$ and $C$. We use calligraphic letters to denote sets, e.g., $\S$. Bold capital letters denote matrices, e.g., $\A$, while bold lowercase letters represent column vectors, e.g., $\a$.  
Finally, $\mathbb{I}(.)$ denotes the indicator function of its argument.
\subsection{Submodular Maximization}
\begin{definition}
\label{def:mon}
Set function $f:2^\Xs\rightarrow \mathbb{R}$  is monotone if $f(\S)\leq f(\T)$ for all $\S\subseteq \T\subseteq \Xs$.
\end{definition}

\begin{definition}
\label{def:submod}
Set function $f:2^\Xs\rightarrow \mathbb{R}$ is submodular if 
\begin{equation}
f(\S\cup \{j\})-f(\S) \geq f(\T\cup \{j\})-f(\T)
\end{equation}
for all subsets $\S\subseteq \T\subset \Xs$ and $j\in \Xs\backslash \T$. The term $f_j(\S)=f(\S\cup \{j\})-f(\S)$ is the marginal value of adding element $j$ to set $\S$.
\end{definition}

Given a monotone non-decreasing set function $f:2^\Xs\rightarrow \mathbb{R}$ with $f(\emptyset)=0$, we are interested in solving the combinatorial optimization problem
\begin{equation}\label{eq:fmax}
\begin{aligned}
& \underset{\S}{\text{maximize}}
\quad f(\S)\\
& \text{subject to}\hspace{0.4cm}  \S \subset \Xs, \phantom{k}|\S|\leq k,
\end{aligned}
\end{equation}
which we denote by $\mathcal{P}(m,k)$, where $|\Xs| = m$.
By a reduction to the well-known set cover problem, the combinatorial optimization \eqref{eq:fmax}
can be shown to be NP-hard \cite{feige1998threshold,williamson2011design}. 
It has been shown  in \cite{nemhauser1978analysis} that if $f(\cdot)$ is monotone and submodular, a simple greedy algorithm that iteratively selects an element with the highest marginal gain (see Algorithm \ref{alg:g}) satisfies the optimal $1-1\slash e$ worst case approximation ratio.

\subsection{Weak Submodularity}
In many problems, the objective function is not submodular but under certain conditions it behaves similarly. Such functions are called {\it weakly} submodular and the extent of their proximity to submodularity is captured using the following parameters.

\begin{definition}
The multiplicative weak-submodularity constant of a monotone non-decreasing function $f$ is defined as
 \begin{equation}
 c_{f}={\max_{(\S,\T,i)\in \tilde{\Xs}}{f_i(\T)\slash f_i(\S)}},
 \end{equation}
where $\tilde{\Xs} = \{(\S,\T,i)|\S \subseteq \T \subset \Xs, i\in \Xs \backslash \T\}$. 
\end{definition}
The multiplicative weak-submodularity constant \cite{zhang2016submodular,chamon2017approximate,hashemi2019submodular} is a closely related concept to submodularity and essentially quantifies how close the set function is to being submodular. It is worth noting that a set function $f(\S)$ is submodular if and only if its multiplicative weak-submodularity constant satisfies 
$c_{f} \le 1$ \cite{das2011submodular,elenberg2016restricted,horel2016maximization}.

A similar notion of weak submodularity is the additive weak-submodularity constant defined below \cite{zhang2016submodular,chamon2017approximate,hashemi2019submodular}.

\begin{definition}
The additive weak-submodularity constant of a monotone non-decreasing function $f$ is defined as
 \begin{equation}
 \epsilon_{f}={\max_{(\S,\T,i)\in \tilde{\Xs}}{f_i(\T)- f_i(\S)}},
 \end{equation}
where $\tilde{\Xs} = \{(\S,\T,i)|\S \subseteq \T \subset \Xs, i\in \Xs \backslash \T\}$. 
\end{definition}
Note that when $f(\S)$ is submodular, its additive weak-submodularity constant satisfies $\epsilon_{f}\leq 0$.

For a monotone function with bounded additive and multiplicative weak-submodularity constants (WSCs) we have the following proposition (see, e.g., \cite{hashemi2019submodular}). 
\begin{proposition}\label{lem:curv}
Let $c_f$ and $e_f$ be the multiplicative and  additive weak-submodularity constants of $f(\S)$, a monotone non-decreasing function with $f(\emptyset)=0$, respectively. Let $\S$ and $\T$ be any subsets such that $\S\subset \T \subseteq \Xs$ with $|\T\backslash \S|=r$. Then, it holds that
\begin{equation}
f(\T)-f(\S)\leq \frac{1}{r}\left(1+(r-1)c_f\right)\sum_{j\in \T\backslash \S}f_j(\S),
\end{equation}
and 
\begin{equation}
f(\T)-f(\S)\leq (r-1)e_f+\sum_{j\in \T\backslash \S}f_j(\S).
\end{equation}
\end{proposition}

It is worth pointing out other weak submodularity notions such as those in \cite{das2011submodular,elenberg2016restricted,zhang2016submodular,horel2016maximization} that depending on the application may simplify the derivation of the approximation bounds (see e.g., \cite{bian2017guarantees,chamon2017approximate,khanna2017scalable,hashemi2019submodular,ghasemi2019submodularity}).

\begin{algorithm}[t]
	\caption{\textsc{Greedy}}
	\label{alg:g}
	\begin{algorithmic}[1]
		\STATE \textbf{Input:} Weak submodular function $f$, ground set $\Xs$, number of  elements to be selected $k$.
		\STATE \textbf{Output:} Subset $\S_g\subseteq \Xs$ with $|\S_g|=k$.
		\STATE Initialize $\S_g^{(0)} = \emptyset$
		\FOR{$i = 0,\dots, k-1$}
		\STATE $j_s \in \text{argmax}_{j \in \Xs} f_j(\S_g^{(i)})$
		\STATE $\S_g^{(i+1)} = \S_g^{(i)}  \cup \{j_s\}$
		\ENDFOR
		\RETURN $\S_g = \S_g^{(k)}$.
	\end{algorithmic}
\end{algorithm}

Using notion of weak-submodularity, one can extend the theoretical results of \cite{nemhauser1978analysis} for \textsc{Greedy} to the case of weak submodular functions \cite{das2011submodular}. In particular, the following proposition holds (see, e.g. \cite{hashemi2019submodular}).
 
\begin{theorem}\label{thm:curv}
Let $c_f$ and $e_f$ be the multiplicative and additive weak-submodularity constants of $f(\S)$, a monotone non-decreasing function with $f(\emptyset)=0$. Let $\S_g \subseteq \Xs$ with $|\S_g|\leq k$ be the subset selected when maximizing $f(\S)$ subject to a cardinality constraint  via the greedy observation selection scheme, and let $\S^\star$ denote the optimal subset. Then
\begin{equation}
f(\S_g) \geq \left(1-e^{-\frac{1}{c}}\right) f(\S^\star),
\end{equation}
where $c=\max\{c_f,1\}$\footnote{Henceforth, we assume $c_f\geq 1$ to emphasize that the objective function is typically weak submodular.} and
\begin{equation}
f(\S_g) \geq \left(1-\frac{1}{e}\right) \left(f(\S^\star)-(k-1)e_f\right).
\end{equation}
\end{theorem}

The approximation results in Proposition \ref{thm:curv} imply that if the objective function is monotone and weak submodular, the greedy selection scheme which in each iteration selects an element with the highest marginal gain finds a solution that is close to the optimal.

\subsection{\textsc{Greedy} with Random Sampling}\label{sec:grs}

If $|\Xs| = m$, in each of $k$ iterations of \textsc{Greedy} one needs to find the marginal gain of $\O(m)$ elements. This is a computationally intensive procedure when the involved datasets are large. Although the computational costs can be reduced using the so-called lazy evaluations \cite{krause2014submodular}, the worst case number of function evaluations of \textsc{Greedy} is $\O(mk)$. The prohibitive complexity of \textsc{Greedy} for large-scale datasets has motivated the design of more efficient schemes for weak submodular maximization. A body of work typically referred to as \textsc{Stochastic-Greedy} algorithms, aims to reduce the number of function evaluations by restricting the search domain in each iteration of the greedy selection procedure 
\cite{mirzasoleiman2015lazier,elenberg2016restricted,hashemi2017randomized,hashemi2020randomized} (see Algorithm \ref{alg:sgp} for a general template). Specifically, instead of evaluating the marginal gain of $\O(m)$ elements, in the $i\ts{th}$ iteration of greedy-with-random-sampling one selects a subset $\R^{(i)} \subseteq \Xs$ by uniformly at random sampling $r_i$  elements and only evaluates marginal gains of the elements in $\R^{(i)}$. References \cite{mirzasoleiman2015lazier,elenberg2016restricted,hashemi2017randomized,hashemi2020randomized} explore the case where $\R^{(i)} = \R$ is fixed and $r_i = r \frac{m}{k}\log\frac{1}{\epsilon}$, where the parameter $\epsilon$, $0<e^{-k}\leq \epsilon \leq e^{-\frac{k}{m}}<1$, determines the size of the search domain $\R^{(i)}$ and thus controls the number of function evaluations in each iteration.\footnote{In this paper, for simplicity of presentation we assume $\log\frac{1}{\epsilon}$ and $r$ are integer-valued quantities.}  It turns out that with this choice, the complexity of greedy-with-random-sampling is $\O(m\log\frac{1}{\epsilon})$ and that it selects a subset $\S_{sg}$ such that
\begin{equation}\label{sgrbound}
\E[f(\S_{sg})] \geq \left(1-1/e-\epsilon\right) f(\S^\star),
\end{equation}
 given that $f$ in \eqref{eq:fmax} is submodular \cite{mirzasoleiman2015lazier}. This approximation ratio is derived under the simplifying assumption that the sequence of random subsets $\{\R_{sg}^{(i)}\}_{i=0}^{k-1}$ is constructed via sampling with replacement. Khanna et al. \cite{khanna2017scalable} analyze \textsc{Stochastic-Greedy} for weak submodular functions and show that  it achieves an {\it expected} $1-e^{-1/c}-\epsilon$ worst case approximation ratio. For the setting where $\{\R_{sg}^{(i)}\}_{i=0}^{k-1}$ are constructed via sampling without replacement, this approximation ratio is improved to $1-e^{-1/c}-\epsilon^\beta/c$, where $\beta = 1+\O(1/k)$ \cite{hashemi2017randomized}. Under a specific set of assumptions on the marginal gain of elements selected in each iteration, \cite{hashemi2020randomized} have shown that a similar result holds not only on expectation, but with high probability.
\begin{algorithm}[t]
	\caption{\textsc{Greedy} with Random Sampling}
	\label{alg:sgp}
\begin{algorithmic}[1]
		\STATE \textbf{Input:} Weak submodular function $f$, ground set $\Xs$, number of  elements to be selected $k$, search space schedule $\{r_i\}_{i=0}^{k-1}$.
		\STATE \textbf{Output:} Subset $\S^{(k)}\subseteq \Xs$ with $|\S^{(k)}|=k$.
		\STATE Initialize $\S^{(0)} = \emptyset$
		\FOR{$i = 0,\dots, k-1$}
		\STATE Form $\R^{(i)}$ by sampling $\min (r_i,m)$ elements from $\Xs$ uniformly at random.
		\STATE $j_s \in \text{argmax}_{j \in \R^{(i)}} f_j(\S^{(i)})$
		\STATE $\S^{(i+1)} = \S^{(i)}  \cup \{j_s\}$
		\ENDFOR
	\end{algorithmic}
\end{algorithm}

\section{Analysis of Success Probability}\label{sec:fail}
We start the analysis by studying the probability of successfully identifying the optimal subset $\S^\star$. This notion is formalized in the following definition.
\begin{definition}
Let $\textsc{Alg}$ be an approximation algorithm for the weak submodular optimization problem \eqref{eq:fmax} with a unique solution $\S^\star$.  Let $\S_{alg}$ be the output of $\textsc{Alg}$. Then, $\textsc{Alg}$ successfully identifies $\S^\star$ if $\S_{alg} = \S^\star$. Furthermore, the probability of success of $\textsc{Alg}$ is defined as $\Pr\left(\S_{alg} = \S^\star\right)$.
\end{definition}
Our goal in this section is to quantify the impact of $r_i$ on the performance of Algorithm \ref{alg:sgp} by analyzing its probability of success. To successfully identify $\S^\star$, in each iteration of Algorithm \ref{alg:sgp} at least one new (not previously selected) element of $\S^\star$ should be present in the randomly selected subset $\R^{(i)}$. More formally, if $\S^{(i)}$ denotes the subset of elements selected by Algorithm \ref{alg:sgp} before executing the $i\ts{th}$ iteration, $i = 0,\dots,k-1$, the set $\R^{(i)} \cap (\S^\star\backslash\S^{(i)})$ should be nonempty. This, however, is not sufficient -- to find the optimal solution, Algorithm \ref{alg:sgp} must in each iteration select elements from $\R^{(i)} \cap (\S^\star\backslash\S^{(i)})$. Since $|\S^\star| = k$ and since in each iteration Algorithm \ref{alg:sgp} selects one element, if there exists an $i\in [k]$ such that $\S^{(i)} \not\subseteq \S^\star$, then Algorithm \ref{alg:sgp} fails to identify $\S^\star$. 
This informal argument underlies Lemma \ref{lem:succ} below which will allow us to characterize the success probability.

\begin{lemma}\label{lem:succ}
Suppose the optimal solution of \eqref{eq:fmax} is unique. Let $\S^{(k)}$ denote the subset selected by Algorithm \ref{alg:sgp}, and let $\R^{(i)}$ denote the randomly selected search space of Algorithm \ref{alg:sgp} in the $i\ts{th}$ iteration. Then, it holds that
\begin{equation}\label{eq:succrel}
\begin{aligned}
    \Pr\left(\S^{(k)} = \S^\star\right)
    &=\prod_{i = 0}^{k-1} p^{(i)}\prod_{i = 0}^{k-1}q^{(i)},
    \end{aligned}
\end{equation}
where
\begin{equation}\label{eq:pi}
    p^{(i)} = \Pr\left(\R^{(i)} \cap (\S^\star\backslash\S^{(i)}) \neq \emptyset \text{ }| \text{ }\S^{(i)} \subset  \S^\star\text{ },\text{ } |\S^{(i)}| = i\right),
\end{equation}
and
\begin{equation}\label{eq:qi}
    q^{(i)} =\Pr\left(\S^{(i+1)} \subset \S^\star \text{ }|\text{ }\R^{(i)} \cap (\S^\star\backslash\S^{(i)}) \neq \emptyset , \text{ }|\S^{(i)}| = i\right).
\end{equation}
\end{lemma}
\begin{proof}
Let $\As^{(i)}$ denote the event $\{\S^{(i+1)} \cap \S^{(i)} \neq \emptyset, \S^{(i+1)} \subseteq  \S^\star\}$. Then the probability of success of  Algorithm \ref{alg:sgp} can be expressed as
\begin{equation}
\begin{aligned}
    \Pr\left(\S^{(k)} =   \S^\star\right) &= \Pr\left(\cap_{i = 0}^{k-1} \As^{(i)}\right)\\
    &= \prod_{i = 0}^{k-1}\Pr\left(\As^{(i)}\text{ }|\text{ }\cap_{j = 0}^{i-1}\As^{(j)}\right)\\
    &= \prod_{i = 0}^{k-1}\Pr\left(\As^{(i)}\text{ }|\text{ }\B^{(i)}\right),
\end{aligned}
\end{equation}
where $\B^{(i)}=\{\S^{(i)} \subset  \S^\star, |\S^{(i)}| = i\}$. Note that $\As^{(i)}$  can equivalently be written as 
\begin{equation}
    \As^{(i)} = \{\R^{(i)} \cap (\S^\star\backslash\S^{(i)}) \neq \emptyset, \S^{(i+1)} \subseteq  \S^\star\}.
\end{equation}
By further conditioning, this can be re-written as
\begin{equation}
\begin{aligned}
    \Pr\left(\S^{(k)} = \S^\star\right)
    &=\prod_{i = 0}^{k-1} p^{(i)}\prod_{i = 0}^{k-1}q^{(i)},
    \end{aligned}
\end{equation}
where $p^{(i)}$ and $q^{(i)}$ are given by \eqref{eq:pi} and \eqref{eq:qi}, respectively.

\end{proof}
Lemma \ref{lem:succ} demonstrates that the probability of success of  Algorithm \ref{alg:sgp} is product of two terms: (i) p = $\prod_{i = 0}^{k-1} p^{(i)}$ that characterizes the likelihood of the event $\mathcal{E}_1$ that $\R^{(i)}$ contains at least one new element of $\S^\star$ for all $i = 0,\dots,k-1$, (ii) $q = \prod_{i = 0}^{k-1} q^{(i)}$ that characterizes the likelihood of the event $\mathcal{E}_2$  of selecting one of the elements in the nonempty intersection of search space $\R^{(i)}$ and $\S^\star\backslash \S^{(i)}$. The events $\mathcal{E}_1$ and $\mathcal{E}_2$ collectively are necessary and sufficient conditions for exact identification of the optimal subset.

The first probability, $p =\prod_{i = 0}^{k-1} p^{(i)}$, is of particular interest as it can be thought of as being a general upper bound on the probability of success. Following this idea, next we establish an upper bound on the asymptotic probability of success and show that, somewhat surprisingly, for large-scale problems some variants with overwhelming probability fail to successfully identify the optimal subset.
 

\begin{theorem}\label{thm:fail}
Consider a sequence of optimization problems $\mathcal{P}(m,k)$ in \eqref{eq:fmax} with increasingly higher dimensions, i.e., the setting where ${m,k\rightarrow \infty}$, $m>k$. 
Let \textsc{Alg} denote a variant of Algorithm \ref{alg:sgp}
with a restricted uniform search space $\R^{(i)} \subset [m]$ having 
cardinality $r_i$ such that $\max_i r_i < m-k$ and $\limsup_{m,k\rightarrow \infty} r_i/m = 0$, for all $i$. Then the probability that \textsc{Alg} succeeds on $\mathcal{P}(m,k)$ goes to zero, i.e.,\footnote{Note that $\S_{alg}^{(k)}$ and $\S^\star$ are quantities that depend on $m$.}
			\begin{equation}
		\limsup_{m,k\rightarrow \infty} 	\Pr\left(\S_{alg}^{(k)} = \S^\star\right) =0.
		\end{equation} 
\end{theorem} 
\begin{proof}
First, assume \textsc{Alg} uses sampling with replacement to construct $\R_{alg}^{(i)}$. We can compute $p_{alg}^{(i)}$ according to
	\begin{equation}
	\begin{aligned}
	p_{alg}^{(i)} &= 1-\Pr\left(\R_{alg}^{(i)} \cap (\S^\star\backslash\S_{alg}^{(i)}) = \emptyset | \B_{alg}^{(i)}\right) \\
	&= 1-\left(1-\frac{|\S^\star\backslash \S_{alg}^{(i)}|}{|\Xs|}\right)^{r}\\
	&= 1-\left(1-\frac{k-i}{m}\right)^{r}.
	\end{aligned}
	\end{equation}
	Note that since $p_{alg}^{(i)}\leq 1$, it follows that
	\begin{equation}\label{eq:bound1}
	\prod_{i = 0}^{k-1} p_{alg}^{(i)} \leq p_{alg}^{(k-1)} = 1-\left(1-\frac{1}{m}\right)^{r}.
	\end{equation}
Therefore, since $\max_i r_i < m-k$ and $\limsup_{m,k\rightarrow \infty} r_i/m = 0$, for all $i$, we can establish
\begin{equation}
\begin{aligned}
    \limsup_{m,k\rightarrow \infty} 	\Pr\left(\S_{alg}^{(k)} = \S^\star\right) &\leq \limsup_{m,k\rightarrow \infty}  \prod_{i = 0}^{k-1} p_{alg}^{(i)} \leq \limsup_{m,k\rightarrow \infty}  p_{alg}^{(k-1)} \\&=  \limsup_{m,k\rightarrow \infty} 1-\left(1-\frac{1}{m}\right)^{r} = 0.
    \end{aligned}
\end{equation}
We next consider the case where $\R_{alg}^{(i)}$ is constructed by sampling the elements in $\Xs \backslash \S_{alg}^{(i)}$ without replacement. The probability of success in this case is higher than when sampling with replacement. Nevertheless, we can derive $p_{alg}^{(i)}$ according to
\begin{equation}
\begin{aligned}
p_{alg}^{(i)} &= 1-\prod_{l = 0}^{r_i-1}\left(1-\frac{k-i}{m-l}\right).
\end{aligned}
\end{equation}
To establish the asymptotic probability, we upperbound the success probability with $p_{alg}^{(k-1)}$. Once again, since $\limsup_{m,k\rightarrow \infty} r_i/m = 0$, for all $i$, one can observe that $\limsup_{m,k\rightarrow \infty}  p_{alg}^{(k-1)} $, thereby proving the stated result.
\end{proof}
Theorem \ref{thm:fail} establishes an upper bound on the probability that a variant of \textsc{Greedy} with a restricted search space constructed uniformly at random identifies $\S^\star$ exactly. The theorem states that as long as $\limsup_{m,k\rightarrow \infty}r_i/m = 0$, the asymptotic success probability is zero. To illustrate the implications of this theorem, we consider the scenario where $r_i = r$ is kept fixed, i.e., the size of the restricted search space is  constant. This choice is explored in \cite{mirzasoleiman2015lazier,elenberg2016restricted,hashemi2017randomized,hashemi2020randomized} where bounds on the expected approximation factor of Algorithm \ref{alg:sgp} are derived with $r = \frac{m}{k}\log \frac{1}{\epsilon}$. Corollary \ref{thm:fail2} below derives bound on the success probability under this consideration.

\begin{corollary}\label{thm:fail2}
Under the assumptions of Theorem \ref{thm:fail} and $r = \frac{m}{k}\log\frac{1}{\epsilon}$, consider a sequence of optimization problems $\mathcal{P}(m,k)$ as (\ref{eq:fmax}) where ${m,k\rightarrow \infty}$. 
The following claims hold:
	\begin{enumerate}
		\item If there exists $\alpha \in (0,1)$ such that $r \leq k^{\alpha-1} m$ (equivalently, $\epsilon \geq \exp(-k^\alpha)$), then the probability that \textsc{Alg} 
		succeeds on $\mathcal{P}(m,k)$ goes to zero, i.e.,
			\begin{equation}
		\limsup_{m,k\rightarrow \infty} 	\Pr\left(\S_{alg}^{(k)} = \S^\star\right) =0.
		\end{equation} 
		\item If there exists $\alpha_1 \in (0,1)$ such that $r \leq \alpha_1 m$ (equivalently, $\epsilon \geq \exp(-\alpha_1 k)$), then the probability that \textsc{Alg} 
		succeeds on $\mathcal{P}(m,k)$ satisfies
		\begin{equation}
		\limsup_{m,k\rightarrow \infty} 	\Pr\left(\S_{alg}^{(k)} = \S^\star\right) \leq 1-\exp\left(-\alpha_1\right)<0.64.
		\end{equation}
	\end{enumerate}
\end{corollary} 
\begin{proof}
For simplicity, we only consider the case of constructing $\R$ via sampling with replacement as the case of sampling without replacement can be treated analogously.

First, consider the setting where $\epsilon \geq \exp(-k^\alpha)$, $0<\alpha <1$.
In light of Theorem \ref{thm:fail},
to establish the first result it suffices to show that
\begin{equation}
    \limsup_{m,k\rightarrow \infty}  1-\left(1-\frac{1}{m}\right)^{r} = 0.
\end{equation}
Using Lemma \ref{lem:ineq} yields
\begin{equation}
\begin{aligned}
    1-\left(1-\frac{1}{m}\right)^{r} &\leq 1-\exp\left(-\frac{r}{m}\right)\left(1-\frac{r}{m^2}\right)\\
    &= 1-\exp\left(\frac{\log{\epsilon}}{k}\right) \left(1+\frac{\log{\epsilon}}{mk}\right)\\
    &\leq 1-\exp\left(-k^{\alpha-1}\right)\left(1-\frac{k^{\alpha-1}}{m}\right),
    \end{aligned}
\end{equation}
where for the last inequality we recall the assumption $\epsilon \geq \exp\left(-k^\alpha\right)$. The result is then established by noting $\limsup_{m,k\rightarrow \infty} \frac{k^{\alpha-1}}{m} = 0$, $\limsup_{k\rightarrow \infty} \exp\left(-k^{\alpha-1}\right) = 1$, and using the squeeze theorem.

Next, consider the second setting, i.e.,  $\epsilon \geq \exp(-\alpha_1 k)$, $0<\alpha_1 <1$. Following a similar approach, one obtains
\begin{equation}\label{eq:rk1}
\begin{aligned}
    \Pr\left(\S_{alg}^{(k)} = \S^\star\right) &\leq 1-\exp\left(-\alpha_1\right)\left(1-\frac{\alpha_1}{m}\right):=\delta_2.
    \end{aligned}
\end{equation}
Since the bound in \eqref{eq:rk1} goes to $1-\exp\left(-\alpha_1\right)$ as  $m,k \rightarrow \infty$, it holds that $\delta_2\leq 1-\exp\left(-\alpha_1\right) < 1-1/e < 0.64$.
\end{proof}
Corollary \ref{thm:fail2} establishes upper bounds on the probability that a variant of \textsc{Greedy} with a restricted search space constructed uniformly at random identifies $\S^\star$ exactly in two scenarios: (i) If the size of the 
search space remains fixed in each iteration of \textsc{Alg} and the algorithm makes $\O(mk^\alpha)$ oracle calls 
for some $\alpha \in (0,1)$, then the probability of the exact identification approaches zero as the problem dimension 
grows. (ii) If the size of the search space remains fixed in each iteration of \textsc{Alg} and strictly less than $[m]$, 
and the algorithm makes $\O(mk)$ oracle calls, then although the probability of the exact identification may be nonzero, it is not asymptotically $1$. Note that in many applications, including sparse reconstruction and sparse learning \cite{tropp2004greed,candes2005decoding}, an arbitrarily high success probability is a condition required to establish any nontrivial sample complexity results, i.e., the minimum number of data points for successful recovery and prediction. 
\subsection{Progressively-Increasing Random Sampling}
The proofs of Theorem \ref{thm:fail} and Corollary \ref{thm:fail2} reveal the underlying cause for the failure of Algorithm \ref{alg:sgp} to find the optimal solution: since the size of the search domain is fixed throughout the iterations, if Algorithm \ref{alg:sgp} successfully identifies elements from $\S^\star$ in earlier iterations, the chance of sampling new elements from $\S^\star$ significantly decreases in the subsequent ones. Therefore, success in earlier iterations increases the chance of failure to select new elements from $\S^\star$ in the subsequent ones. This phenomenon is not encountered in the \textsc{Greedy} algorithm since in each iteration \textsc{Greedy} considers all the elements in the ground set, including those in $\S^\star$. Therefore, although in initial iterations Algorithm \ref{alg:sgp} may search smaller domains, if the goal is to identify exactly all the elements in $\S^\star$, one should progressively increase the size of the search domain to improve the probability of success. 


 
We thus propose a simple \textit{increasing schedule} strategy of search spaces which grow to \textit{ultimately reach size $m$}. In particular, we progressively increase the size of the search domain as the cardinality of the identified subset $\S^{(i)}$ grows. The proposed method, referred to as {\it Progressive Stochastic Greedy} (\textsc{Psg}), in the $i^{th}$ iteration samples  $r_i=\frac{m}{k-i}\log\frac{1}{\epsilon}$ elements uniformly at random from $\Xs$ to construct the search set $\R^{(i)}$. Following \cite{mirzasoleiman2015lazier,elenberg2016restricted,hashemi2017randomized,hashemi2020randomized}, we let $\epsilon$, such that $e^{-k}\leq \epsilon \leq e^{-\frac{k}{m}}$, be a parameter which allows one to strike a desired balance between performance and complexity; the sampling may be with or without replacement. Note that since $r_i \leq m$ for all $i = 0,\dots,k-1$, for any iteration $i$ such that $i\geq i_s:= k -\log\frac{1}{\epsilon}$, we set $r_i$ to its maximum value, $m$. 
Thus, this procedure can be interpreted as a hybrid scheme that provides a soft transition from a restricted search space with random sampling to \textsc{Greedy}.    
\subsection{Complexity Analysis of \textsc{Psg}}
Recall that \textsc{Greedy} and Algorithm \ref{alg:sgp} with $r_i = r = \frac{m}{k}\log \frac{1}{\epsilon}$, i.e., \textsc{Stochastic-Greedy} \cite{mirzasoleiman2015lazier,elenberg2016restricted,hashemi2017randomized,hashemi2020randomized}, require $\O(mk)$ and $\O(m\log\frac{1}{\epsilon})$ function evaluations, respectively. In the following, we express the complexity of \textsc{Psg} in terms of the total number of function evaluations throughout the iterations of the algorithm. Specifically, given our choice of $r_i$, we have that
\begin{equation}\label{eq:cmp}
\begin{aligned}
\O&\left(\sum_{i=0}^{k-\log\frac{1}{\epsilon}}\frac{m}{k-i}\log\frac{1}{\epsilon}+m\log\frac{1}{\epsilon} \right) 
= \O\left(m\log(\frac{1}{\epsilon})\log k\right).
\end{aligned}
\end{equation}
As an example, for  $\epsilon \geq \exp(-k^\alpha)$ and $\epsilon \geq \exp(-\alpha k)$, $0<\alpha<1$, the complexity expression in \eqref{eq:cmp} reduces to $\tilde{\O}(mk^\alpha)$ and $\O(mk)$, respectively. Thus, the proposed method incurs at most a factor $\O(\log k)$ higher complexity than \textsc{Stochastic-Greedy}. As we show in the reminder of the next subsection, this relatively small increase in complexity is sufficient to satisfy a necessary condition for identifying $\S^\star$.
\subsection{Lower Bound on the Probability of Success}
We now aim to determine whether \textsc{Psg}, i.e., Algorithm \ref{alg:sgp} with the proposed progressively-increasing schedule, can identify the optimal solution to $\mathcal{P}(m,k)$ as $m,k \rightarrow \infty$. 

To answer the above question, one needs to establish a sufficient condition for the exact identification of $\S^\star$ or, equivalently, a lower bound on the probability of success of Algorithm \ref{alg:sgp}. Recall from Lemma \ref{lem:succ} that
\begin{equation}
\begin{aligned}
    \Pr\left(\S^{(k)} = \S^\star\right)
    &=\prod_{i = 0}^{k-1} p^{(i)}\prod_{i = 0}^{k-1}q^{(i)}.
    \end{aligned}
\end{equation}
Therefore, it suffices to derive nontrivial  lower bounds on $\prod_{i = 0}^{k-1} p^{(i)}$ and $\prod_{i = 0}^{k-1} q^{(i)}$. A lower bound on $q^{(i)}$ can be obtained by considering the conditions under which the largest marginal gain of  elements in $\R^{(i)}\cap \S^\star$ exceeds that in $\R^{(i)}\backslash \S^\star$ for all $i= 0,\dots,k-1$, i.e., 
\begin{equation}\label{eq:sofcon}
    \max_{j\in \R^{(i)}\backslash \S^\star} f_j(\S^{(i)}) < \max_{j\in \R^{(i)}\cap \S^\star} f_j(\S^{(i)}),
\end{equation}
with high probability. Doing so requires problem dependent information and a general result cannot be derived. In the supplementary we specify such a condition for the problem of sparse support selection. However, in Theorem \ref{thm:plus} below, we provide a lower bound on  $\prod_{i=0}^{k-1} p^{(i)}$.
\begin{theorem}\label{thm:plus}
Suppose the optimal solution to \eqref{eq:fmax} is unique. Let $r_i = \min(\frac{m}{k-i}\log\frac{1}{\epsilon},m)$, for all $i = 0,\dots,k-1$. Then, 
\begin{equation}\label{eq:bw}
    p = \prod_{i = 0}^{k-1} p^{(i)} \geq  \left(1-\epsilon\right)^{k-\log\frac{1}{\epsilon}}.
\end{equation}
Furthermore, if $\epsilon = \frac{1}{k^\alpha}$ for some $\alpha>1$, then $\limsup_{m,k\rightarrow \infty} p = 1$.
\end{theorem}
\begin{proof}
Note that since $r_i = m$ for all $i\geq k-\log\frac{1}{\epsilon}$, it follows that $p^{(i)} = 1$. Let us first consider the setting of sampling with replacement. There, it holds that
\begin{equation}
\begin{aligned}
\prod_{i = 0}^{k-1} p^{(i)}&=\prod_{i=0}^{k-\log\frac{1}{\epsilon}-1}\left(1-\left(1-\frac{k-i}{m}\right)^{r_i}\right)\\
&\geq \prod_{i=0}^{k-\log\frac{1}{\epsilon}-1} \left(1-\exp\left(-r_i\frac{k-i}{m}\right)\right)\\
&= \left(1-\epsilon\right)^{k-\log\frac{1}{\epsilon}}.
\end{aligned}
\end{equation}

Next, we consider the setting of sampling without replacement. For every $i<k-\log\frac{1}{\epsilon}$,
\begin{equation}
\begin{aligned}
p^{(i)} &= 1-\prod_{l = 0}^{r_i-1}\left(1-\frac{k-i}{m-l}\right) \\
&\stackrel{(a)}{\geq} 1-\left(1-\frac{k-i}{r_i}\sum_{l=0}^{r_i-1}\frac{1}{m-l}\right)^{r_i}\geq 1-\left(1-\frac{k-i}{m}\right)^{r_i}\\
&\stackrel{(b)}{\geq} 1-\exp\left(-r_i\frac{k-i}{m}\right)= 1-\epsilon,
\end{aligned}
\end{equation}
where $(a)$ is obtained by using the inequality relating arithmetic and geometric means, and $(b)$ is due to the fact that $(1+x)^y\leq e^{xy}$ for any real number $y\geq 1$. Therefore, just as in the case of sampling with replacement, \eqref{eq:bw} holds.

Finally, to establish the asymptotic result, using $(1+x)^y\leq e^{xy}$ we can show that
\begin{equation}
    \left(1-\epsilon\right)^{k-\log\frac{1}{\epsilon}} \geq 1-k\epsilon.
\end{equation}
Hence, if $\epsilon = \frac{1}{k^\alpha}$ for some $\alpha>1$, $\limsup_{m,k\rightarrow \infty} p = 1$ by the squeeze theorem. 
\end{proof}
Note that, as we argued in Corollary \ref{thm:fail2}, for the specific value of $\epsilon= \frac{1}{k^\alpha}$ the success probability of Algorithm \ref{alg:sgp} with  $r_i = r = \frac{m}{k}\log \frac{1}{\epsilon}$ goes to zero.  Additionally, it turns out that in the regime where $\epsilon \leq \frac{1}{k^\alpha}$, the proposed procedure requires $\O(m\log^2k) = \tilde{\O}(m)$ function evaluations as opposed to $\O(m\log k)$ function evaluations of \textsc{Stochastic-Greedy}.  

Since we established a lower bound on $\prod_{i = 0}^{k-1} p^{(i)}$ in \eqref{eq:bw}, it just remains to derive a nontrivial lower bound on $q^{(i)}$ in order to show existence of a sufficient condition for the exact identification of $\S^\star$, and establish a lower bound on the probability of success of Algorithm \ref{alg:sgp} with the proposed sampling strategy. While this general result requires problem-dependent information, in the Supplementary we build upon the idea of \eqref{eq:sofcon} to derive a lower bound on the probability of success for the problem of sparse support selection.
\subsection{Verifying the Theory}
\begin{figure*}[t]
	\centering
		\begin{subfigure}{.33\textwidth}
		\centering
		\includegraphics[width=\linewidth]{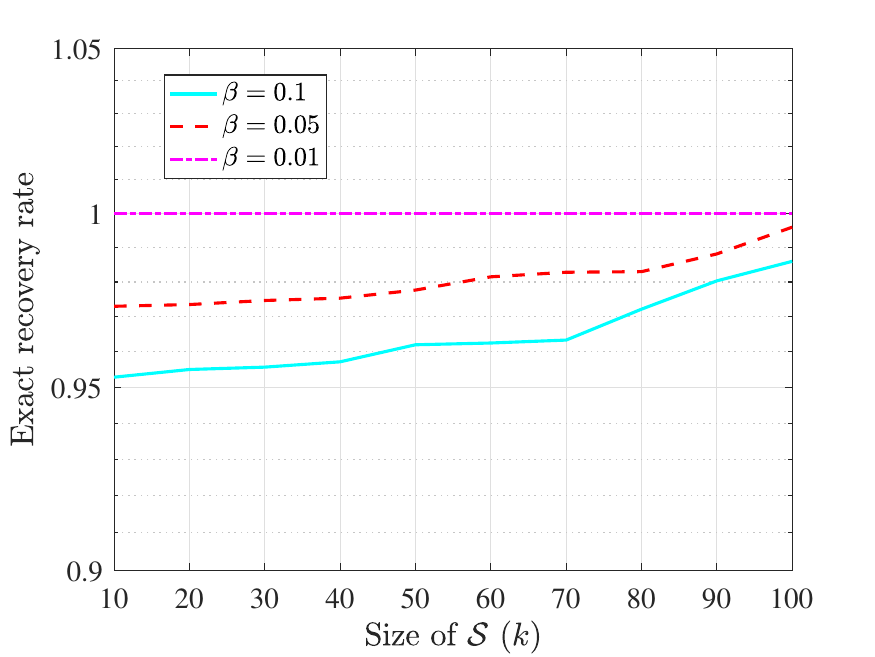}
		\caption{\footnotesize \textsc{Psg}'s recovery performance}
		\label{fig:sub3}
	\end{subfigure}
	\begin{subfigure}{0.33\textwidth}
		\centering
		\includegraphics[width=\linewidth]{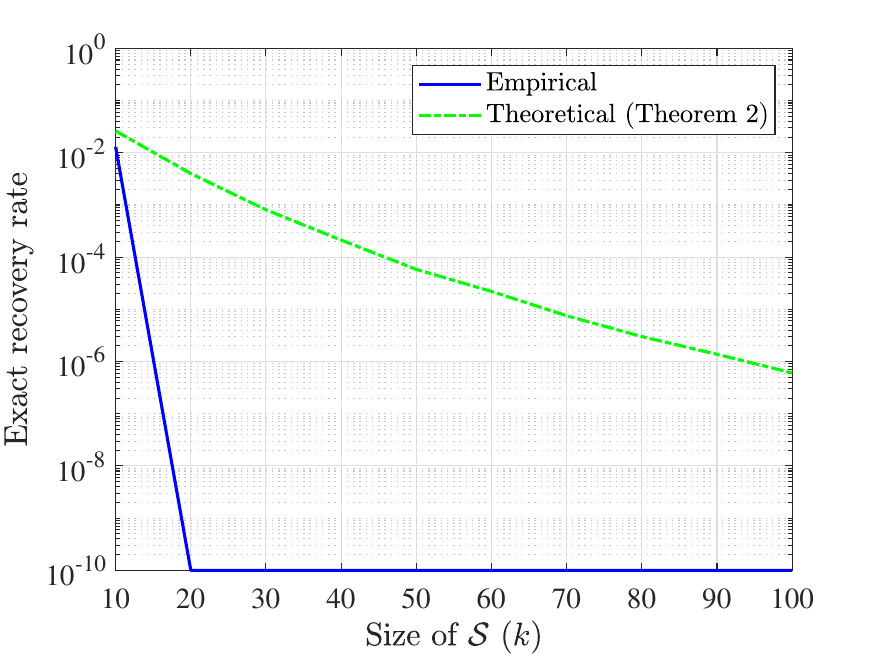}
		\caption{\footnotesize $r = m/\sqrt{k}$}
		\label{fig:sub1}
	\end{subfigure}%
	\begin{subfigure}{.33\textwidth}
		\centering
		\includegraphics[width=\linewidth]{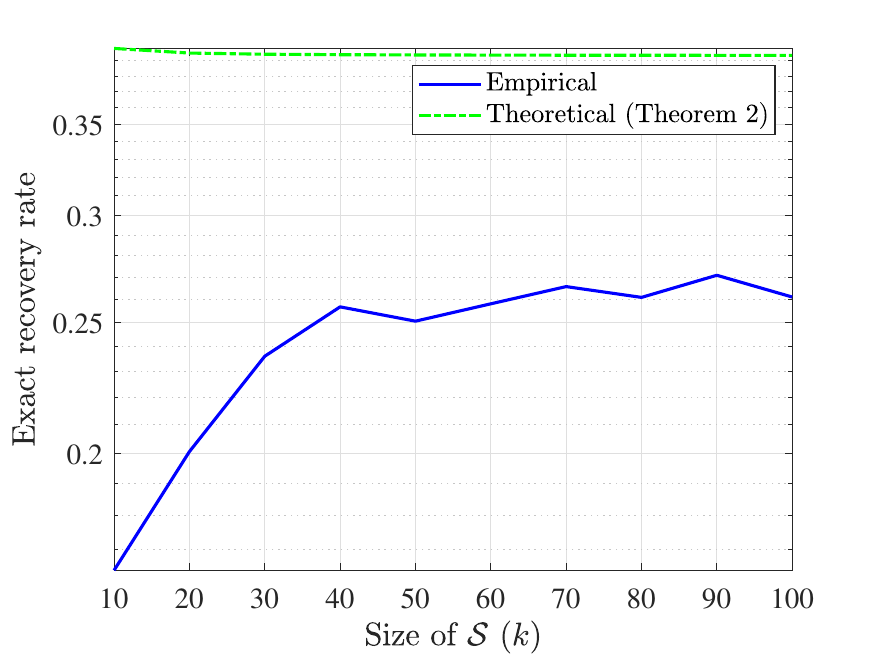}
		\caption{\footnotesize $r = m/2$}
		\label{fig:sub2}
	\end{subfigure}
	\caption{Empirical evaluation of the theoretical bounds established by Theorem \ref{thm:fail} and Corollary \ref{thm:fail2}.}
	\label{fig:sg}
	\vspace{-0.2cm}
\end{figure*}
In this section, we verify our theoretical results by comparing them to the empirical ones obtained via Monte Carlo (MC) 
simulations. Specifically, we consider the task of sparse support selection \cite{tropp2004greed,candes2005decoding}, in which we are given a linear measurement vector $\y=\A\x + {\bf \e}$ where $\x \in\Rb^{m}$ is a $k$-sparse unknown vector, i.e., a vector with at most $k$ non-zero components, $\y \in\Rb^{n}$ denotes the vector of measurements, $\A \in\Rb^{n \times m}$ is the coefficient matrix assumed to be full rank, and ${\bf \e} \in\R^{n}$ denotes the additive measurement noise vector.
The search for a sparse approximation of $\x$ leads to the NP-hard cardinality-constrained least-squares problem
\begin{equation}  \label{sls}
\begin{aligned}
& \underset{\x}{\text{minimize}}
\quad\|\y-\A\x\|^{2}_{2}
\qquad \text{subject to} \quad
\|\x\|_{0} \leq k,
\end{aligned}
\end{equation}
which can be interpreted as an instance of \eqref{eq:fmax} \cite{das2011submodular,khanna2017scalable}.

We consider a setting with increasing support size $k$ (varied from 
$10$ to $100$) and set the dimensions of the signal and the number of measurements to $m = 2k^{1.5}$ and $n = 6k\log(m/k\sqrt[6]{4\beta})$, 
respectively, for three different values of $\beta = 0.1, 0.05, 0.01$. In each trial, we select locations of the nonzero elements of $\x$ uniformly at random and draw those elements 
from a normal distribution. Entries of the coefficient matrix $\A$ are also generated randomly from ${\cal N}(0,\frac{1}{n})$. 
The results are averaged over 1000 Monte Carlo trials. Note that, as we show in the supplementary, in the above settings \textsc{Psg} is able to recover $\x$ exactly.

First, we investigate the exact performance of \textsc{Psg} with the schedule
\begin{equation}
    r_i=\frac{m}{k-i}\log\frac{1}{\epsilon}, \quad \epsilon = \frac{\beta}{k}
\end{equation}
for $\beta = 0.1, 0.05, 0.01$, and show the results in 
Fig.~\ref{fig:sg}(a). As can be seen there, the empirical exact recovery rate of \textsc{Psg} is very close to $1$; this coincides with the theoretical lower bound of $1-2\beta$ established in the supplementary that builds upon the insights of Theorem \ref{thm:fail} (i.e, the achieved rate is $0.8,0.9,0.98$ for $\beta = 0.1, 0.05, 0.01$, respectively). 

Next, we empirically verify the results of Theorem \ref{thm:fail} and Corollary \ref{thm:fail2} wherein we established an upper bound on the success probability 
of a variant of Algorithm \ref{alg:sgp}, named \textsc{Alg}, with a restricted uniform search space. Fig.~\ref{fig:sg} (b,c) compares this theoretical 
result with the empirical success rate for $r = m/\sqrt{k}$ and $r = m/2$, which correspond to instances of the two settings considered in Corollary \ref{thm:fail2}. Fig.~\ref{fig:sg}(b) shows that for $r = m/\sqrt{k}$ the success rate goes to zero as $k$ increases, as predicted by the first part of Corollary \ref{thm:fail2}.\footnote{Note that, in this setting, for $k\geq 20$ \textsc{Alg} failed in all of the trials; however, for  illustration purposes (i.e., to be able to show the plot in the logarithmic scale) 
we set the success 
rate of \textsc{Alg} for $k\geq 20$ to $10^{-10}$.} In Fig.~\ref{fig:sg}(c) we see that the success rate does not go to zero for $r = m/2$; moreover, it is always bounded by $1-e^{-0.5}\approx 0.39$, as claimed by the second part of Corollary \ref{thm:fail2}.

\section{Analysis of Approximation Factor}\label{sec:alg}
As argued in Section \ref{sec:fail}, finding a lower bound on the success probability requires problem-dependent information and thus a general result cannot be derived. Therefore, in this section we focus on analyzing the performance of \textsc{Psg} from the perspective of establishing nontrivial worst-case 
approximation factors.  
\subsection{Expected Approximation Factor}
We first establish bounds on the expected approximation factors of \textsc{Psg}. Then, assuming martingale structure of the marginal gains encountered in Algorithm \ref{alg:sgp}, we establish a high probability result for the approximation factor and present it in the next subsection.
\begin{proposition}\label{thm:expf}
Let $\S_{psg}$ denote the random subset selected by \textsc{Psg}, and let $c_f$ and $e_f$ be the multiplicative and additive weak submodularity constants of the 
submodularity ratio of the set function objective in \eqref{eq:fmax}, respectively. Then
\begin{equation}\label{eq:expbound}
	\E[f(\S_{psg})] \geq A_c f(\S^\star),\; \E[f(\S_{psg})] \geq A_e (f(\S^\star)-(k-1)e_f),
\end{equation}
where
\begin{equation}
\begin{aligned}
    A_c &:= 1-\prod_{i=1}^{k-\log \frac{1}{\epsilon}}\left(1-\frac{1-\epsilon^{\frac{k}{k-i-1}}}{kc}\right) \left(1-\frac{1}{kc}\right)^{\log \frac{1}{\epsilon}}, 
    \end{aligned}
\end{equation}
and
\begin{equation}
\begin{aligned}
    A_e &:= 1-\prod_{i=1}^{k-\log \frac{1}{\epsilon}}\left(1-\frac{1-\epsilon^{\frac{k}{k-i-1}}}{k}\right)\left(1-\frac{1}{k}\right)^{\log \frac{1}{\epsilon}}.
    \end{aligned}
\end{equation}
\end{proposition}
\begin{proof}
We first prove the result under a bounded $c_f$ assumption. To simplify the notation, let $f_{i}:= f(\S_{psg}^{(i)})$ and $f^\star:= f(\S^\star)$.

Since $c_f\geq 1$, it holds that $\frac{1}{r}(1+(r-1)c_f)\leq c_f$, for all $r$. Employing the first part of Lemma \ref{lem:curv} with $\S=\S_{psg}^{(i)}$ 
and $\T=\S^\ast\cup \S_{psg}^{(i)}$, and recalling monotonicity of $f$, we obtain
\begin{equation}\label{eq:1}
\begin{aligned}
f(\S^\ast)-f(\S_{psg}^{(i)})&\leq f(\S^\ast\cup \S_{psg}^{(i)})-f(\S_{psg}^{(i)})
\\&\leq  c_f\sum_{j\in \S^\ast\backslash \S_{psg}^{(i)}}f_j(\S_{psg}^{(i)}).
\end{aligned}
\end{equation}
We now proceed with the analysis by considering two distinct scenarios: the case where $r_i<m$ and hence the search space is strictly less than that of \textsc{Greedy}, and the case where $r_i = m$. Note that the first iteration in which the latter condition holds is $i_s := k-\log\frac{1}{\epsilon}$.

For any $i<i_s$, we need to find the probability that the current search space $\R^{(i)}$ samples an element from the optimal subset. In light of the analysis in Section \ref{sec:fail},
this probability for both sampling with and without replacement may be bounded below by 
\begin{equation}
\begin{aligned}
    \Pr\left(\R^{(i)} \cap (\S^\star\backslash\S^{(i)})\neq \emptyset\right) &\geq 1-e^{-\frac{r_i}{m}|\S^\star\backslash\S^{(i)}|} \\&\geq (1-\epsilon^{\frac{k}{k-i}})(\frac{|\S^\star\backslash\S^{(i)}|}{k}).
    \end{aligned}
\end{equation}
Given that for any $i<i_s$ \textsc{Psg} selects an element from $\R^{(i)}$ greedily, conditioned on $\R^{(i)} \cap (\S^\star\backslash\S^{(i)})\neq \emptyset$, the marginal gain of the selected value is larger than that of any element selected from $\R^{(i)} \cap (\S^\star\backslash\S^{(i)})$ uniformly at random (an in turn from $\S^\star\backslash\S^{(i)}$, given that the sampling is uniform). Therefore,
\begin{equation}
    \E\left[f_{(i+1)}(\S_{psg}^{(i)})|\S_{psg}^{(i)}\right]\geq \frac{1-\epsilon^{\frac{k}{k-i}}}{|\S^\star\backslash\S^{(i)}|} \sum_{j\in \S^\ast\backslash \S_{psg}^{(i)}}f_j(\S_{psg}^{(i)}).
\end{equation}
Since $|\S^\star\backslash\S^{(i)}|\leq k$, using \eqref{eq:1} and taking the total expectation we obtain for all $i<i_s$
\begin{equation}
\begin{aligned}
\E\left[(f_{i+1}-f_i)\right]\geq \frac{\left(1-\epsilon^{\frac{k}{k-i}}\right)}{kc_f}\left(f^\ast-\E\left[f_i\right]\right).
\end{aligned}
\end{equation}
For $i\geq i_s$, the search space of \textsc{Psg} is equivalent to that of \textsc{Greedy}. Therefore,
\begin{equation}
\begin{aligned}
f_{i+1}-f_i\geq \frac{1}{kc_f}\left(f^\ast-\E\left[f_i\right]\right).
\end{aligned}
\end{equation}
Let $\Delta_i = f^\ast-\E\left[f_i\right]$. Using induction,
\begin{equation}
    \Delta_{k-1} \leq (1-A_c)\Delta_0.
\end{equation}
Rearranging and noting that $\Delta_0 = f^\ast$ since $f(\emptyset) =0$ establishes the stated result.

The proof of the second part is analogous, except we leverage the second part of Lemma \ref{lem:curv} and define $\Delta_i = f^\ast-\E\left[f_i\right]-(k-1)e_f$ to arrive at the recursion $\Delta_{k-1} \leq (1-A_e)\Delta_0$,
thereby completing the proof.
\end{proof}
Note that, as expected, if the function $f$ is submodular, i.e., $c_f = 1$ and $e_f = 0$, then $A_c = A_e$ and the expected approximation factors in \eqref{eq:expbound} become identical.

It can be show using elementary algebra that $A_c \geq 1-e^{-\frac{1}{c_f}}-\frac{\epsilon}{c_f}$ and $A_e \geq 1-\frac{1}{e}-\epsilon$. Such bounds are expected as the search space of \textsc{Psg}, i.e. $r_i$, can be lower bounded by that of the \textsc{Stochastic-Greedy}, $r=\frac{m}{k}\log{(1/\epsilon)}$. Therefore, the expected approximation factors of \textsc{Psg} are at least as large as those of \textsc{Stochastic-Greedy}. However, it is worth noting that these lower bound are typically loose. To show this, we plot $A_e$ and compare the results to the approximation factors of \textsc{Greedy} and \textsc{Stochastic-Greedy} in Fig.~\ref{fig:compexpf} for two values of $\epsilon = 0.5,\;0.9$. As the figure demonstrates, there is a large gap between the expected approximation factors of \textsc{Psg} and \textsc{Stochastic-Greedy}. Note that the improvement in the approximation factor only requires a marginal increase in the number of oracle calls [cf. \eqref{eq:cmp}].
	\begin{figure}[t]
	\centering
		\includegraphics[width=0.4\textwidth]{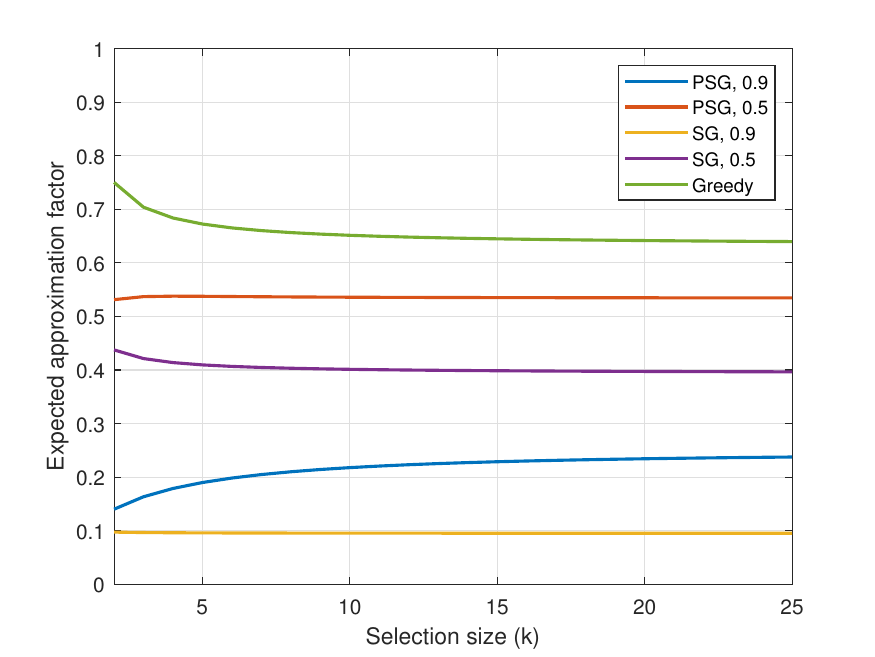}
		\caption{\footnotesize  Comparison of expected approximation factors of \textsc{Psg}, \textsc{Greedy}, and \textsc{Stochastic-Greedy} for $\epsilon = 0.5,\;0.9$ and different values of $k$.}
		\label{fig:compexpf}
	\end{figure}
\subsection{High-Probability Bounds for the Approximation Factor}
In Proposition \ref{thm:expf}, we established bounds on the expected approximation factor of the proposed scheme. Given that these results hold only on expectation, it is of interest to explore whether similar performance guarantees hold with high probability. The prior work \cite{hashemi2020randomized} has shown a high probability bound for Algorithm \ref{alg:sgp} when $r_i = r  = \frac{m}{k}\log \frac{1}{\epsilon}$. To derive this result, each step of Algorithm \ref{alg:sgp} is interpreted as an approximation of the marginal gains of
the selected elements using \textsc{Greedy},
\begin{equation}\label{eq:eta}
    f_{j_{(alg)}}(\S^{(i)}) = \eta^{(i)}  f_{j_{(g)}}(\S^{(i)}),
\end{equation}
where subscripts $alg$ and $g$ refer to the elements selected by Algorithm \ref{alg:sgp} and \textsc{Greedy}, respectively, and $\{\eta^{(i)}\}_{i=1}^k$ is a collection of random variables in $(0,1]$. 

The result of \cite{hashemi2020randomized}, however, relies on the assumption that $\{\eta^{(i)}\}_{i=1}^k$ are independent, which is not realistic since the element selected in each iteration depend on the selected subset so far. In Proposition \ref{thm:hpf}, we relax that independence assumption and show that as long as $\{\eta^{(i)}\}_{i=1}^k$ form a martingale, one can derive a high-probability performance guarantee. Additionally, we derive high-probability bounds using both additive and multiplicative weak submodularity constants.
\begin{proposition}\label{thm:hpf}
    Instate the notation and assumptions of Proposition \ref{thm:expf}. Let $\eta$ be a martingale with $\E(\eta) \geq \mu$ satisfying the conditions of Theorem \ref{thm:mar}. Then, it holds with probability of at least $1-\delta$ (for any $\delta>0$) that
	\begin{equation}\label{eq:hpbound}
	f(\S_{psg}) \geq A^{hp}_c f(\S^\star),\; f(\S_{psg}) \geq A^{hp}_e (f(\S^\star)-(k-1)e_f),
	\end{equation}
\end{proposition}
where
\begin{equation}
\begin{aligned}
    A^{hp}_c &:= 1-\exp\left(-\frac{1}{c_f}\left(\mu +\sqrt{\frac{\log \frac{1}{\delta}}{2k}}+\frac{\log \frac{1}{\epsilon}}{k}\right)\right),
    \end{aligned}
\end{equation}
and
\begin{equation}
\begin{aligned}
    A^{hp}_e &:= 1-\exp\left(-\left(\mu +\sqrt{\frac{\log \frac{1}{\delta}}{2k}}+\frac{\log \frac{1}{\epsilon}}{k}\right)\right).
    \end{aligned}
\end{equation}
\begin{proof}
Let $f_{i}:= f(\S_{psg}^{(i)})$ and $f^\star:= f(\S^\star)$. Note that for any element $j$ it holds that
\begin{equation}
   f_{j}(\S_{psg}^{(i)}) \leq f_{(j)_{psg}}(\S_{psg}^{(i)})\leq  f_{(j)_{g}}(\S_{psg}^{(i)}).
\end{equation}
Combining the above result with \eqref{eq:1} and \eqref{eq:eta} we obtain
\begin{equation}
\begin{aligned}
f_{i+1}-f_i\geq \frac{\eta^{(i+1)}\mathbb{I}(i<i_s)+\mathbb{I}(i\geq i_s)}{kc_f}\left(f^\ast-f_i\right).
\end{aligned}
\end{equation}
Using an inductive argument similar to the one in the proof of Proposition \ref{thm:expf}, we obtain
\begin{equation}\label{eq:prop2}
\begin{aligned}
f_{k-1}\geq f^\ast \left[1-e^{-\frac{1}{kc_f}\sum_{i=1}^{i_s} \eta^{(i) }}\left(1-\frac{1}{kc_f}\right)^{\log \frac{1}{\epsilon}}\right].
\end{aligned}
\end{equation}
Next, we use the result of Theorem \ref{thm:mar} to bound the deviation of $\sum_{i=1}^{i_s} \eta^{(i) }$ from its mean with high probability. By Theorem \ref{thm:mar} and the assumption that $ \eta^{(i) } \in (0,1]$,
\begin{equation}
 \Pr\left(\sum_{i=1}^{i_s} \eta^{(i) }-i_s \mu>\lambda\right) \leq\exp\left(-\frac{\lambda^2}{2 i_s}\right).
\end{equation}
Setting the right-hand side of the above iinequality to $\delta$ yields that with probability exceeding $1-\delta$
\begin{equation}
    \sum_{i=1}^{i_s}\eta^{(i) } \leq i_s \mu +\sqrt{\frac{i_s}{2}\log \frac{1}{\delta}} \leq k \mu +\sqrt{\frac{k}{2}\log \frac{1}{\delta}},
\end{equation}
where we relied on the fact that $i_s \leq k$. Using this bound in \eqref{eq:prop2} along with $(1+x)^y\leq e^{xy}$ for any real number $y\geq 1$ proves the first part of the proposition.

Leveraging the second part of Lemma \ref{lem:curv} and repeating similar arguments furnishes the second result.
\end{proof}
Proposition \ref{thm:hpf} establishes high probability bounds on the worst-case performance of the proposed scheme. For large $k$, the approximation factors may be approximated by $1-e^{-\mu/c_f}$ and $1-e^{-\mu}$ for multiplicative and additive weak submodularity constants, respectively. Therefore, if $\mu$ is near $1$, as expected, the marginal gains of elements selected by \textsc{Psg} are relatively close to those that would have been selected by \textsc{Greedy}, which in turn means we recover the guarantees of \textsc{Greedy} in that regime.
\section{Applications to Subset Selection}\label{sec:sim}
We demonstrate efficacy of the proposed algorithm in two applications, namely column subset selection for sparse subspace clustering and observation selection for target tracking.
\subsection{Column Subset Selection for Subspace Clustering}
\begin{figure*}[t]
	\centering
	\minipage[t]{1\linewidth}
	\begin{subfigure}[t]{.33\linewidth}
		\includegraphics[width=\textwidth]{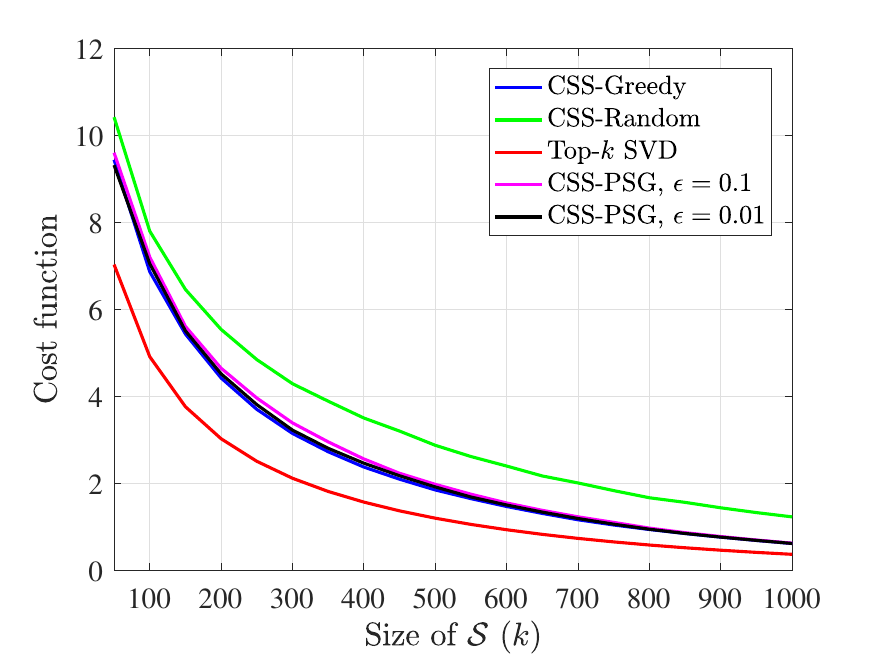}
		\caption{\footnotesize  Reconstruction error}
	\end{subfigure}
	\begin{subfigure}[t]{.33\linewidth}
		\includegraphics[width=\linewidth]{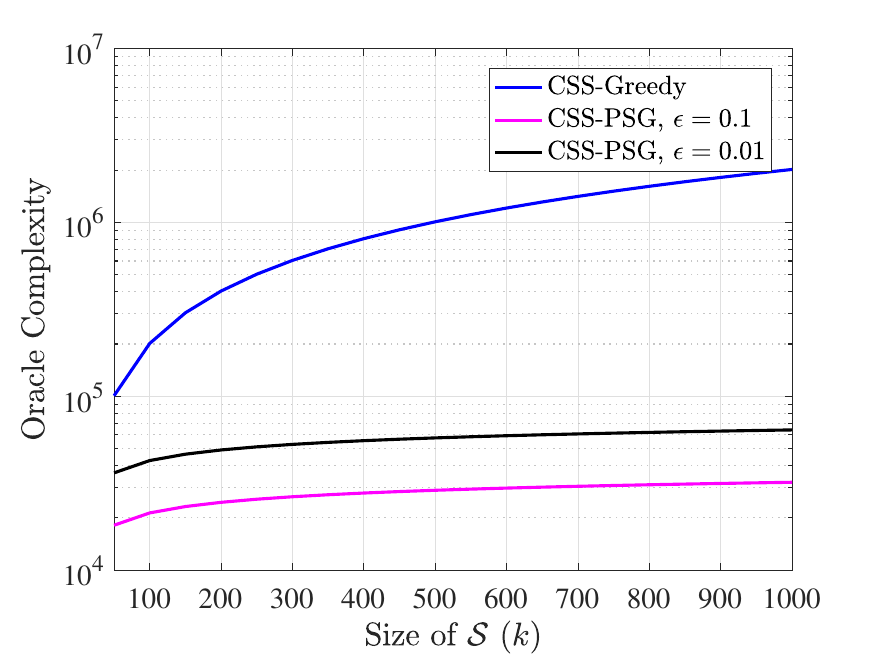}
		\caption{\footnotesize Oracle complexity}
	\end{subfigure}
	\begin{subfigure}[t]{.33\linewidth}
		\includegraphics[width=\linewidth]{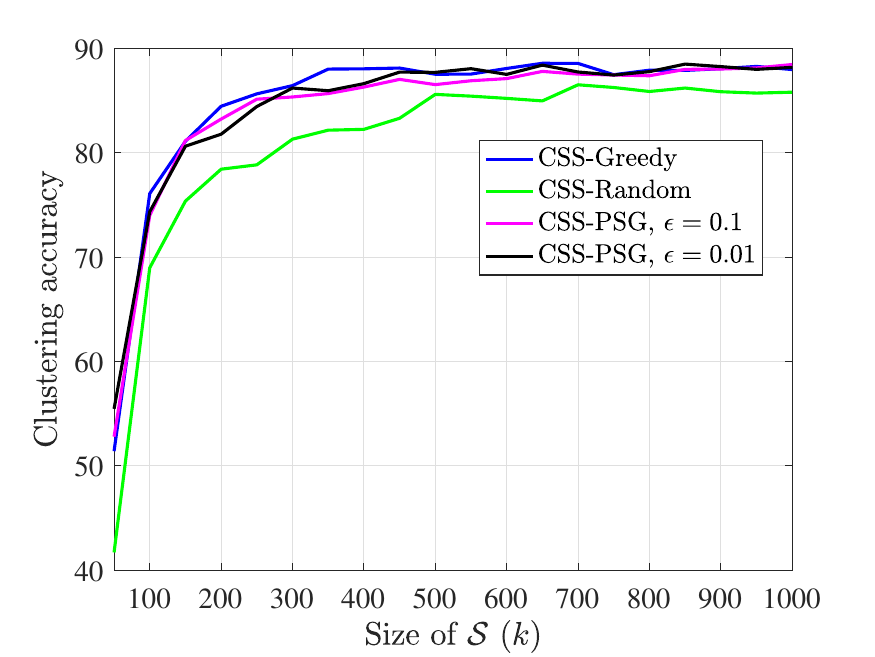}
		\caption{\footnotesize Clustering accuracy}
	\end{subfigure}
	\caption{Performance comparison of SSC with various CSS-based dimensionality reduction schemes on EYaleB
		dataset consisting of face images under 64 different illumination conditions.}
	\label{fig:real}
	\endminipage 
	\vspace{-0.4cm}
\end{figure*}
Here we present results of an empirical evaluation of the proposed \textsc{Psg} scheme; specifically, the performance of \textsc{Psg} is compared to several baselines on the task of dimensionality reduction via column subset selection (CSS) \cite{tropp2009column} for sparse subspace clustering (SSC) \cite{elhamifar2009sparse,you2015sparse,hashemi2018accelerated,hashemi2018evolutionary}.

The goal of CSS is to identify a subset $\S$, $|\S| = k$, of the set of $m$ columns of a data matrix $\A \in \mathbb{R}^{n \times m}$ that best approximate the entire data matrix. Formally, the task of identifying $\S$ can be cast as the optimization problem
\begin{equation}\label{eq:css}
\begin{aligned}
& \underset{\S}{\text{minimize}}
\quad \|\A-\P_{\S}\A\|_F^2
& \text{s.t.}\quad  |\S| = k,
\end{aligned}
\end{equation}
where $\P_\S = \A_S \A_S^\dagger$ is the projection operator 
(specifically, projection onto the span of columns of $\A_\S$) and $\A_\S^\dagger=\left(\A_\S^{\top}\A_\S\right)^{-1}\A_\S^{\top}$ 
denotes the Moore-Penrose pseudo-inverse of $\A_\S$. Since $\A = \P_{\S}\A+ (\I-\P_{\S})\A$ and $\|\A\|_F^2 = \|\P_{\S}\A\|_F^2+ \|(I-\P_{\S})\A\|_F^2$ by properties of projection matrices, \eqref{eq:css} can equivalently be written as an instance of the weak submodular maximization task \eqref{eq:fmax} \cite{farahat2015greedy,bhaskara2016greedy}.

We aim to use Algorithm \ref{alg:sgp} as a CSS-based dimensionality reduction technique to reduce the cost of performing clustering via SSC \cite{elhamifar2009sparse,you2015sparse}. That is, using a lower dimensional data matrix $\A_{\S_g}$ obtained via CSS, we learn the representation matrix $\mathbf{C}$ by solving
\begin{equation}
\underset{\mathbf{C}}{\text{minimize}}
\quad \|\A_{\S_g}-\A_{\S_g}\mathbf{C}\|_F^2+\lambda \|\mathbf{C}\|_1,
\end{equation}
and then employ spectral clustering \cite{ng2002spectral} on $\mathbf{W} = |\mathbf{C}|+|\mathbf{C}|^\top$ to segment the data points.

We consider the proposed \textsc{Psg} scheme with two values of $\epsilon$: 
$\epsilon = 0.1$ and $\epsilon = 0.01$. We consider \textsc{Greedy} and random column subset selection as the benchmarking schemes. Additionally, we use the best rank-$k$ approximation of a matrix (i.e., top-$k$ SVD) to serve as an upper bound on the achievable performance; note that this scheme explicitly minimizes the Forbenius reconstruction criterion. We compare performance of the above algorithms using the real EYaleB dataset \cite{georghiades2001few} which contains frontal face images
of 38 individuals under 64 different illumination conditions (see Fig.~\ref{fig:face}). There are $m=2414$ columns (i.e., features) in this dataset; we select $k$ out of $m=2414$ columns, where $k$ varies from 100 to 1000, and apply the SSC method of \cite{you2015sparse} to cluster the data points based on the selected features.

\begin{figure*}[t]
	\centering
	\minipage[t]{1\linewidth}
	\begin{subfigure}[t]{0.33\linewidth}
		\includegraphics[width=\textwidth]{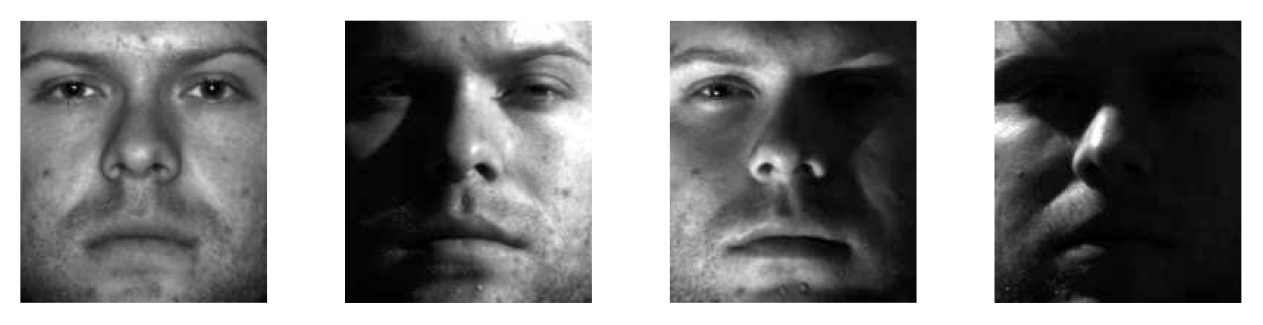}
	\end{subfigure}
	\begin{subfigure}[t]{.33\linewidth}
		\includegraphics[width=\linewidth]{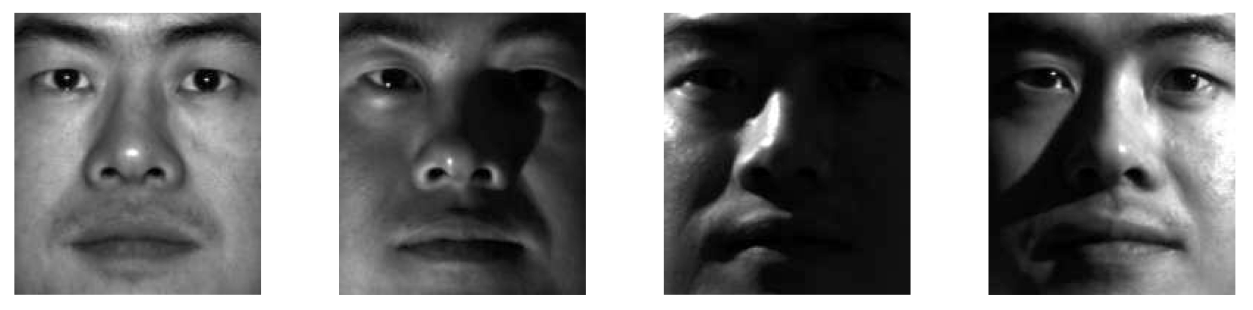}
	\end{subfigure}
	\begin{subfigure}[t]{.33\linewidth}
		\includegraphics[width=\linewidth]{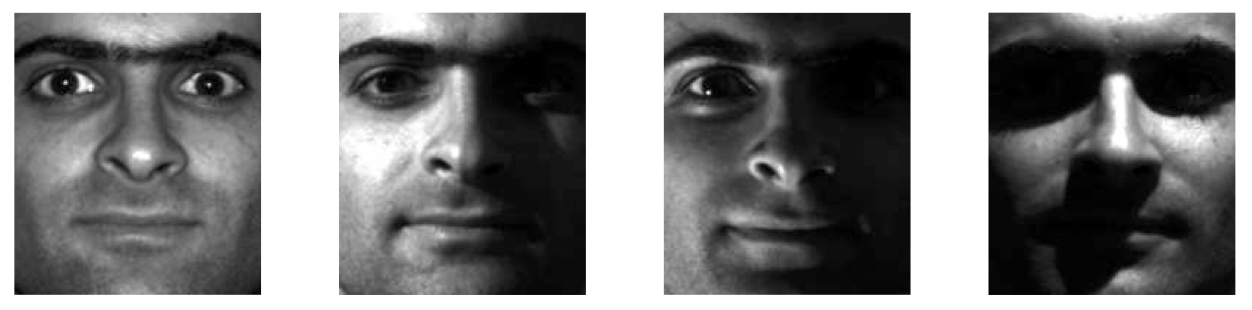}
	\end{subfigure}
	\caption{Face clustering: given images of multiple subjects, find images that belong to the same subject (examples from the EYaleB dataset \cite{georghiades2001few}).}
	\label{fig:face}
	\endminipage 
\end{figure*} 

Fig. \ref{fig:real} shows the performance of various column subset selection schemes as well as the top-$k$ SVD approach. In Fig. \ref{fig:real}(a) we observe that the reconstruction errors of \textsc{Greedy} and the proposed scheme are nearly identical, and that as we increase the number of selected columns the reconstruction error decreases; this is consistent with the fact that $f(\S)$ is a monotone function. Fig. \ref{fig:real}(b) shows a significant computational complexity improvement that the proposed scheme provides over the greedy CSS method. Since the complexity of Algorithm \ref{alg:sgp} increases logarithmically in $k$, the cost of selecting more columns is relatively small compared to the greedy approach. Note that we observe  $\epsilon = 0.1$ achieves the best tradeoff between computational costs and performance. Furthermore, depending on the amount of data available, the value of $\epsilon$ can be tuned using cross-validation. Finally, in Fig. \ref{fig:real}(c) we compare the clustering accuracy of SSC applied to a subset of features selected by different schemes. As the figure shows, clustering performance of SSC combined with the proposed CSS method is nearly identical to that of the conventional greedy approach; moreover, both achieve superior accuracy compared to schemes that randomly select subsets of columns.
\subsection{Observation Selection in Target Tracking}
	\begin{figure}[t]
	\centering
		\includegraphics[width=0.48\textwidth]{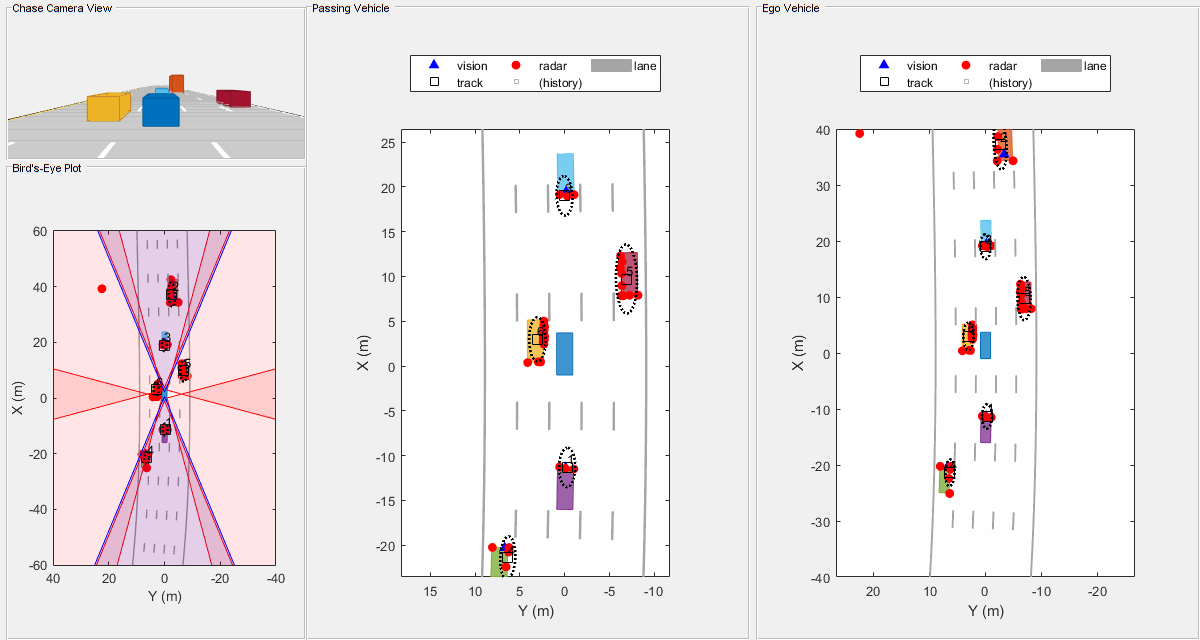}
		\caption{\footnotesize  The object tracking scenario: an ego vehicle (dark blue) moves on a highway with five lanes and aims to identify and track the vehicles in its surroundings using six radar and two camera sensors.}
		\label{fig:ego}
		\vspace{-3mm}
	\end{figure}
\begin{figure}[t]
	\centering
	\begin{subfigure}[t]{0.95\linewidth}
		\includegraphics[width=\linewidth]{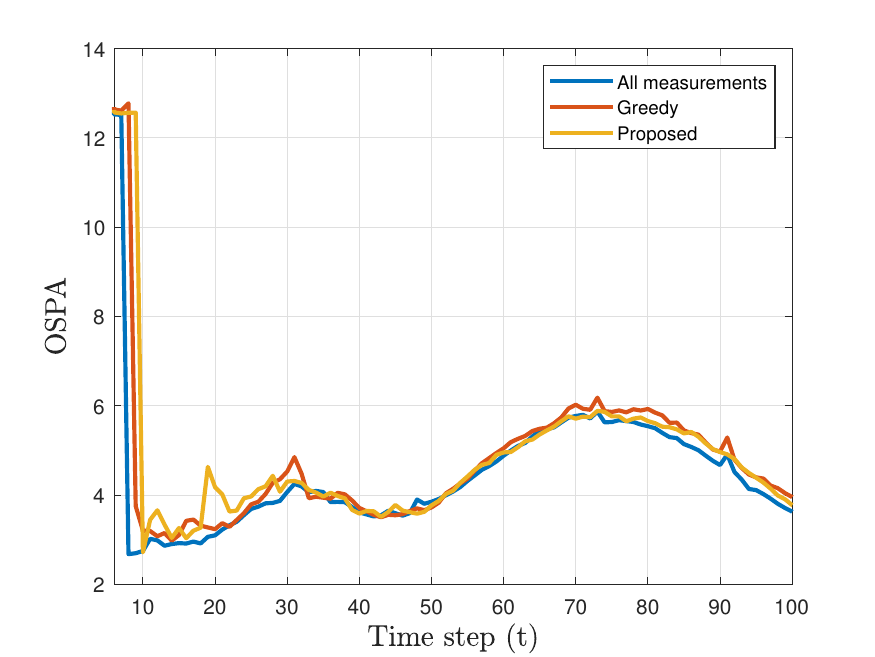}
		\caption{\footnotesize Tracking error}
	\end{subfigure}
	\begin{subfigure}[t]{0.95\linewidth}
		\includegraphics[width=\linewidth]{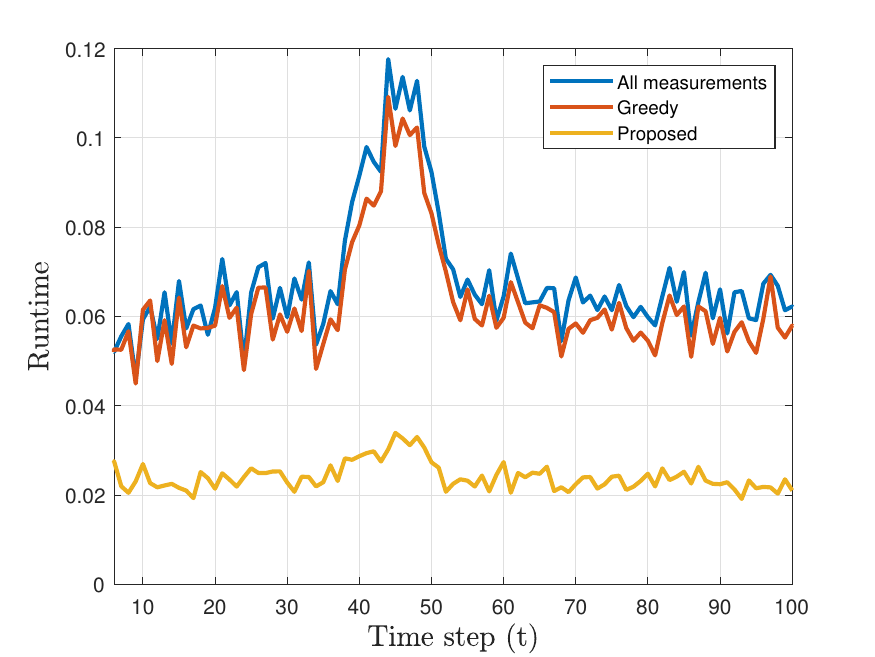}
		\caption{\footnotesize Runtime}
	\end{subfigure}
	\caption{Performance comparison of the proposed observation selection algorithm with the greedy approach, and an approach that utilizes all of the gathered observations. The proposed scheme results in nearly the same OSPA while significantly reducing the running time of tracking.}
	\label{fig:target}
	\vspace{-4mm}
\end{figure}
We consider a realistic multi-object tracking application by an autonomous car, referred to as the ego vehicle, that moves on a highway. Perception plays a major role in the planning subsystem of an autonomous vehicle; accurately identifying and tracking the critical objects in its surrounding, e.g., other vehicles on the highway, are critical in the design of the perception subsystem \cite{ghasemi2020task}. Given the typically large amount of data gathered by the vehicle's sensors,  observation selection techniques may be employed to reduce the computational burden of tracking algorithms and improve their runtime \cite{ghasemi2019online,hashemi2020randomized}. Motivated by this argument, we resort to performing observation selection via the proposed scheme.

We consider a scenario with an ego vehicle that moves on a stretch of 500 meters of a typical highway road with five lanes, along with six other vehicles (see Fig. \ref{fig:ego} and the supplementary gif file). All the vehicles move with the speed of $25\;km/s$, except for the passing vehicle which has a speed of  $35\;km/s$. The ego vehicle  has six radar sensors and
two vision sensors covering the 360 degrees field of view. The sensors have
some overlap and some coverage gap. The ego vehicle is equipped with a
long-range radar sensor and a vision sensor on both the front and the
back of the vehicle. Each side of the vehicle has two short-range radar
sensors, each covering 90 degrees. One sensor on each side covers the area from
the middle of the vehicle to the back. The other sensor on each side
covers the area from the middle of the vehicle forward. We set the properties of the sensors such that they are in agreement with the realistic characteristics of radar and camera sensors \cite{matlab}. The sensors gather noisy observations about the location and velocity of the surrounding vehicles. After selecting the observations via a subset selection algorithm, the selected observations are used in a GGIW-PHD extended object tracker that relies on a rectangular target model for the surrounding vehicles \cite{granstrom2011tracking,granstrom2012extended,granstrom2016extended}. We assess the performance of the observation selection and tracking algorithms based on the optimal sub-pattern assignment (OSPA) metric \cite{schuhmacher2008consistent,ristic2011metric} as well as the time it takes to produce the estimated locations.

Fig. \ref{fig:target} shows the performance comparison of the proposed scheme with greedy observation selection that selects 50\% of the observations. We also provide a comparison with a method that uses all of the gathered observations in the tracking algorithm. As Fig. \ref{fig:target} (a) shows, all schemes achieve similar tracking accuracy. Note that OSPA increases from time $t=50$ to $t=75$ as in this interval the passing vehicle gets farther from the ego vehicle, but it is still considered as a tracked object in the OSPA metric. However, after $t=75$, the passing vehicle is not in the line of sight of the ego car anymore, resulting in the OSPA metric to improve.

Fig. \ref{fig:target} (b) depicts the runtime comparison and demonstrates that using the proposed scheme to perform observation selection significantly reduces the cost of object tracking. Note that in the interval $(35,50)$, the passing vehicle is in the proximity of the ego vehicle. Therefore, the mounted sensors produce significantly more observations, thereby increasing the runtime of the benchmarking schemes. However, thanks to the progressive random sampling strategy, the runtime of the proposed scheme does not increase significantly in this interval.
\section{Conclusion} \label{sec:conc}
In this paper, we studied the problem of large-scale monotone weak submodular maximization that comes up in many modern signal processing and machine learning applications including sparse reconstruction, dimensionality reduction, observation gathering, and sensor selection. Motivated by the desire to reduce complexity of the celebrated greedy scheme, we theoretically studied fundamental performance limits of restricting the size of the greedy search space by means of uniform sampling strategies. We first studied the asymptotic probability of successfully identifying the optimal subset. Our analysis revealed that many of the standard practices that rely on fixed sampling sizes lead to a success probability that is asymptotically zero.  We showed that an increasing schedule of the search space size satisfies a necessary condition for the exact identification of the optimal subset in large-scale problems. Following this insight, we proposed a progressive stochastic greedy algorithm and demonstrated its efficacy in the applications to sparse subset selection, dimensionality reduction, and extended object tracking. We further established strong guarantees, both on expectations and with high probability, on the approximation factors of the proposed algorithm.
	
Our established framework gives rise to interesting open problems. First, the proposed analysis can be employed to study exact identification conditions of \textsc{Psg} in other classes of weak submodular maximization problems, i.e., settings beyond the sparse support selection task considered in this paper. For instance, exact identification conditions of \textsc{Greedy} for the task of observation selection were recently considered by  \cite{sharma2015greedy}. Utilizing our framework, similar results may be established for \textsc{Psg}. Furthermore, as we argued, the intersection of search space of \textsc{Psg} and new elements from the optimal subset is nonempty with high probability  for all iterations. Since \textsc{Greedy} considers all the elements of the ground set in each iteration, the intersection of the search space of \textsc{Greedy} and new elements from the optimal subset is always nonempty. Therefore, it is reasonable to ask whether the set of conditions under which \textsc{Greedy} and \textsc{Psg} exactly identify the optimal subset of a given weak submodular optimization problem $\Ps$ is the same. Indeed, this is the case for the problem of sparse support selection. Finally, studying general weak submodular optimization problems remains of interest.
\section*{Acknowledgment}
This work was supported in part by NSF grant ECCS-1809327.
\begin{appendices}
\section{Auxiliary Lemma}
\begin{lemma} \label{lem:ineq}
For every $|a|\leq 1$ and $b\geq 1$ it holds that $(1+a)^b \geq e^{ab}(1-a^2b)$.
\end{lemma}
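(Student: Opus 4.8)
The plan is to isolate the exponential factor $e^{ab}$ and reduce the claim to a statement about a single base raised to the power $b$. Writing $c := (1+a)e^{-a}$, I would use the identity $(1+a)^b = c^b\,e^{ab}$, and since $e^{ab}>0$ the lemma becomes equivalent to showing $c^b \geq 1 - a^2 b$. Observe that $c \geq 0$ because $a \geq -1$, so $c^b$ is well defined and multiplying through by $e^{ab}$ never reverses an inequality; this lets me work entirely with the base $c$.

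First I would invoke Bernoulli's inequality in the form $(1+x)^b \geq 1 + bx$, valid for real exponent $b \geq 1$ and $x \geq -1$; this is precisely the step that consumes the hypothesis $b \geq 1$. Taking $x = c-1 \geq -1$ gives $c^b \geq 1 - b(1-c)$. It then suffices to prove $1 - b(1-c) \geq 1 - a^2 b$, i.e.\ (dividing by $b>0$) the clean base-level inequality $c \geq 1 - a^2$. Second, I would derive this from the elementary bound $e^{-a} \geq 1 - a$: multiplying it by $1 + a \geq 0$ yields $c = (1+a)e^{-a} \geq (1+a)(1-a) = 1-a^2$. Chaining $(1+a)^b = c^b e^{ab} \geq \bigl(1 - b(1-c)\bigr)e^{ab} \geq (1-a^2 b)e^{ab}$ then closes the argument with no case analysis.

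The step I expect to be the main obstacle is deciding where to spend the hypothesis $b \geq 1$. The tempting route — factoring out $e^{ab}$ and then applying $e^{t} \geq 1+t$ to the exponent $t = b(\log(1+a)-a)$ — only reduces the claim to $\log(1+a) \geq a - a^2$, which fails as $a \to -1^{+}$ and, crucially, never uses $b \geq 1$ at all. Since the lemma is genuinely false for small $b$ (for instance near $a=-1$ with $b=\tfrac12$ the right-hand side exceeds the left), any correct proof must engage that hypothesis. The key realization is that routing $b \geq 1$ through Bernoulli's inequality applied to the base $c$, rather than through a tangent-line bound in the exponent, converts the problem into the sharp and globally valid inequality $e^{-a} \geq 1-a$, which holds on all of $[-1,1]$ with no loss even at the endpoint $a=-1$.
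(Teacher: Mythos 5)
Your proof is correct, and it takes a genuinely different route from the paper's. The paper substitutes $x=ab$, forms $g(x)=e^{-x}(1+\tfrac{x}{b})^b-(1-\tfrac{x^2}{b})$, and runs a critical-point analysis: it computes where $g'$ vanishes, evaluates $g$ there (obtaining $\tfrac{(x+1)^2}{b}+1-\tfrac{1}{b}\geq 0$, which is where $b\geq 1$ enters), and checks the endpoints $x=\pm b$. You instead factor out $e^{ab}$, reduce the claim to $c^b\geq 1-a^2b$ with $c=(1+a)e^{-a}\geq 0$, and chain two standard inequalities: Bernoulli's inequality $(1+x)^b\geq 1+bx$ for $b\geq 1$, $x\geq -1$ (which is where your hypothesis $b\geq 1$ is consumed), followed by $e^{-a}\geq 1-a$ multiplied by $1+a\geq 0$ to get $c\geq 1-a^2$. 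Your argument is purely algebraic, needs no derivative computation or case analysis, handles the endpoint $a=-1$ without special treatment, and makes the role of $b\geq 1$ completely explicit; the paper's version is terser but as written has small gaps (it conflates $f'$ with $g'$, omits the derivative computation that yields the critical-point identity, and asserts strict positivity that actually degenerates to equality at $x=-1$, $b=1$). Your observation that the statement fails for $b<1$ near $a=-1$ is also correct and usefully explains why any proof must route through the exponent hypothesis. Both proofs establish the lemma; yours is the cleaner and more self-contained of the two.
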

\begin{proof}
	Let x = $ab$, $|x|\leq b$. 
	Consider $g(x) = e^{-x}(1+\frac{x}{b})^b-(1-\frac{x^2}{b})$. At $x = 0$, both $g(x)$ and $f^\prime(x)$ are zero. If $f^\prime(x) = 0$ for any other $x$ in the interval, for such $x$ we have
	$$e^{-x}(1+\frac{x}{b})^b = 2+\frac{2x}{b}.$$
	Therefore, for such $x$ 
	$$g(x) = \frac{(x+1)^2}{b}+1-\frac{1}{b}>0.$$
	Furthermore, since $g(b)>0$ for all $b$ while $g(-b)>0$ for $b>1$ and $g(-b)=0$ for $b=1$, all other points we must have $g(x)>0$.
\end{proof}
\section{Azuma’s inequality \cite{chung2006concentration}}
\begin{theorem}\label{thm:mar}
Let $X$ be a martingale associated with a filtration $\mathcal{F}$ and a sequence of random variables $X_0,X_1,\dots,X_n$ satisfying $X_i = \E[X | \mathcal{F}_i]$ and, in particular, $X_0 = \E[X]$ and $X_n = X$. If $|X_i-X_{i-1}|\leq c_i$, then, 
\begin{equation}
    \Pr\left(X-\E[X]>\lambda\right) \leq \exp\left(-\frac{\lambda^2}{2\sum_{i=1}^nc_i^2}\right).
\end{equation}
\end{theorem}
\end{appendices}
\bibliographystyle{ieeetr}
\bibliography{refs}

\begin{thebibliography}{10}

\bibitem{hashemi2021performance}
A.~Hashemi, H.~Vikalo, and G.~de~Veciana, ``On the performance-complexity
  tradeoff in stochastic greedy weak submodular optimization,'' in {\em ICASSP
  2021-2021 IEEE International Conference on Acoustics, Speech and Signal
  Processing (ICASSP)}, pp.~3540--3544, IEEE, 2021.

\bibitem{nemhauser1978analysis}
G.~L. Nemhauser, L.~A. Wolsey, and M.~L. Fisher, ``An analysis of
  approximations for maximizing submodular set functions?i,'' {\em Mathematical
  Programming}, vol.~14, no.~1, pp.~265--294, 1978.

\bibitem{mirzasoleiman2015lazier}
B.~Mirzasoleiman, A.~Badanidiyuru, A.~Karbasi, J.~Vondrak, and A.~Krause,
  ``Lazier than lazy greedy,'' in {\em AAAI Conference on Artificial
  Intelligence}, AAAI, 2015.

\bibitem{elenberg2016restricted}
E.~R. Elenberg, R.~Khanna, A.~G. Dimakis, and S.~Negahban, ``Restricted strong
  convexity implies weak submodularity,'' {\em The Annals of Statistics},
  vol.~46, no.~6B, pp.~3539--3568, 2018.

\bibitem{hashemi2017randomized}
A.~Hashemi, M.~Ghasemi, H.~Vikalo, and U.~Topcu, ``A randomized greedy
  algorithm for near-optimal sensor scheduling in large-scale sensor
  networks,'' in {\em American Control Conference (ACC)}, pp.~1027--1032, IEEE,
  2018.

\bibitem{hashemi2020randomized}
A.~Hashemi, M.~Ghasemi, H.~Vikalo, and U.~Topcu, ``Randomized greedy sensor
  selection: Leveraging weak submodularity,'' {\em IEEE Transactions on
  Automatic Control}, vol.~66, no.~1, pp.~199--212, 2020.

\bibitem{williamson2011design}
D.~P. Williamson and D.~B. Shmoys, {\em The design of approximation
  algorithms}.
\newblock Cambridge university press, 2011.

\bibitem{kempe2003maximizing}
D.~Kempe, J.~Kleinberg, and {\'E}.~Tardos, ``Maximizing the spread of influence
  through a social network,'' in {\em Proceedings of the ninth ACM SIGKDD
  international conference on Knowledge discovery and data mining},
  pp.~137--146, ACM, 2003.

\bibitem{krause2008near}
A.~Krause, A.~Singh, and C.~Guestrin, ``Near-optimal sensor placements in
  gaussian processes: Theory, efficient algorithms and empirical studies,''
  {\em Journal of Machine Learning Research}, vol.~9, no.~Feb, pp.~235--284,
  2008.

\bibitem{guillory2012active}
A.~Guillory and J.~A. Bilmes, ``Active semi-supervised learning using
  submodular functions,'' {\em arXiv preprint arXiv:1202.3726}, 2012.

\bibitem{tropp2007signal}
J.~A. Tropp and A.~C. Gilbert, ``Signal recovery from random measurements via
  orthogonal matching pursuit,'' {\em IEEE Transactions on Information Theory},
  vol.~53, no.~12, pp.~4655--4666, Dec. 2007.

\bibitem{joshi2009sensor}
S.~Joshi and S.~Boyd, ``Sensor selection via convex optimization,'' {\em IEEE
  Transactions on Signal Processing}, vol.~57, no.~2, pp.~451--462, 2009.

\bibitem{das2011submodular}
A.~Das and D.~Kempe, ``Submodular meets spectral: {G}reedy algorithms for
  subset selection, sparse approximation and dictionary selection,'' in {\em
  Proceedings of the International Conference on Machine Learning (ICML)},
  pp.~1057--1064, 2011.

\bibitem{mirzasoleiman2013distributed}
B.~Mirzasoleiman, A.~Karbasi, R.~Sarkar, and A.~Krause, ``Distributed
  submodular maximization: Identifying representative elements in massive
  data,'' in {\em Advances in Neural Information Processing Systems},
  pp.~2049--2057, 2013.

\bibitem{balkanski2018adaptive}
E.~Balkanski and Y.~Singer, ``The adaptive complexity of maximizing a
  submodular function,'' in {\em Proceedings of the 50th Annual ACM SIGACT
  Symposium on Theory of Computing}, pp.~1138--1151, ACM, 2018.

\bibitem{ene2019submodular}
A.~Ene and H.~L. Nguyen, ``Submodular maximization with nearly-optimal
  approximation and adaptivity in nearly-linear time,'' in {\em Proceedings of
  the Thirtieth Annual ACM-SIAM Symposium on Discrete Algorithms},
  pp.~274--282, SIAM, 2019.

\bibitem{fahrbach2018submodular}
M.~Fahrbach, V.~Mirrokni, and M.~Zadimoghaddam, ``Submodular maximization with
  optimal approximation, adaptivity and query complexity,'' {\em arXiv preprint
  arXiv:1807.07889}, 2018.

\bibitem{minoux1978accelerated}
M.~Minoux, ``Accelerated greedy algorithms for maximizing submodular set
  functions,'' in {\em Optimization techniques}, pp.~234--243, Springer, 1978.

\bibitem{badanidiyuru2014fast}
A.~Badanidiyuru and J.~Vondr{\'a}k, ``Fast algorithms for maximizing submodular
  functions,'' in {\em Proceedings of the twenty-fifth annual ACM-SIAM
  symposium on Discrete algorithms}, pp.~1497--1514, SIAM, 2014.

\bibitem{zhang2011sparse}
T.~Zhang, ``Sparse recovery with orthogonal matching pursuit under {RIP},''
  {\em IEEE Transactions on Information Theory}, vol.~57, no.~9,
  pp.~6215--6221, Sep. 2011.

\bibitem{davenport2010analysis}
M.~A. Davenport and M.~B. Wakin, ``Analysis of orthogonal matching pursuit
  using the restricted isometry property,'' {\em IEEE Transactions on
  Information Theory}, vol.~56, no.~9, pp.~4395--4401, Sep. 2010.

\bibitem{mo2012remark}
Q.~Mo and Y.~Shen, ``A remark on the restricted isometry property in orthogonal
  matching pursuit,'' {\em IEEE Transactions on Information Theory}, vol.~58,
  no.~6, pp.~3654--3656, June. 2012.

\bibitem{tropp2004greed}
J.~A. Tropp, ``Greed is good: {A}lgorithmic results for sparse approximation,''
  {\em IEEE Transactions on Information Theory}, vol.~50, no.~10,
  pp.~2231--2242, Oct. 2004.

\bibitem{cai2011orthogonal}
T.~T. Cai and L.~Wang, ``Orthogonal matching pursuit for sparse signal recovery
  with noise,'' {\em IEEE Transactions on Information Theory}, vol.~57, no.~7,
  pp.~4680--4688, July 2011.

\bibitem{zhang2009consistency}
T.~Zhang, ``On the consistency of feature selection using greedy least squares
  regression,'' {\em Journal of Machine Learning Research}, vol.~10,
  pp.~555--568, Mar. 2009.

\bibitem{candes2005decoding}
E.~J. Candes and T.~Tao, ``Decoding by linear programming,'' {\em IEEE
  Transactions on Information Theory}, vol.~51, no.~12, pp.~4203--4215, Dec.
  2005.

\bibitem{tropp2009column}
J.~A. Tropp, ``Column subset selection, matrix factorization, and eigenvalue
  optimization,'' in {\em Proceedings of the twentieth annual ACM-SIAM
  symposium on Discrete algorithms}, pp.~978--986, Society for Industrial and
  Applied Mathematics, 2009.

\bibitem{guruswami2012optimal}
V.~Guruswami and A.~K. Sinop, ``Optimal column-based low-rank matrix
  reconstruction,'' in {\em Proceedings of the twenty-third annual ACM-SIAM
  symposium on Discrete Algorithms}, pp.~1207--1214, SIAM, 2012.

\bibitem{boutsidis2014near}
C.~Boutsidis, P.~Drineas, and M.~Magdon-Ismail, ``Near-optimal column-based
  matrix reconstruction,'' {\em SIAM Journal on Computing}, vol.~43, no.~2,
  pp.~687--717, 2014.

\bibitem{farahat2013distributed}
A.~K. Farahat, A.~Elgohary, A.~Ghodsi, and M.~S. Kamel, ``Distributed column
  subset selection on mapreduce,'' in {\em 2013 IEEE 13th International
  Conference on Data Mining}, pp.~171--180, IEEE, 2013.

\bibitem{farahat2015greedy}
A.~K. Farahat, A.~Elgohary, A.~Ghodsi, and M.~S. Kamel, ``Greedy column subset
  selection for large-scale data sets,'' {\em Knowledge and Information
  Systems}, vol.~45, no.~1, pp.~1--34, 2015.

\bibitem{civril2012column}
A.~Civril and M.~Magdon-Ismail, ``Column subset selection via sparse
  approximation of svd,'' {\em Theoretical Computer Science}, vol.~421,
  pp.~1--14, 2012.

\bibitem{altschuler2016greedy}
J.~Altschuler, A.~Bhaskara, G.~Fu, V.~Mirrokni, A.~Rostamizadeh, and
  M.~Zadimoghaddam, ``Greedy column subset selection: new bounds and
  distributed algorithms,'' in {\em Proceedings of the 33rd International
  Conference on International Conference on Machine Learning-Volume 48},
  pp.~2539--2548, JMLR. org, 2016.

\bibitem{krause2014submodular}
A.~Krause and D.~Golovin, ``Submodular function maximization,'' in {\em
  Tractability: Practical Approaches to Hard Problems}, pp.~71--104, Cambridge
  University Press, 2014.

\bibitem{feige1998threshold}
U.~Feige, ``A threshold of ln n for approximating set cover,'' {\em Journal of
  the ACM}, vol.~45, no.~4, pp.~634--652, Jul. 1998.

\bibitem{zhang2016submodular}
H.~Zhang and Y.~Vorobeychik, ``Submodular optimization with routing
  constraints,'' in {\em AAAI Conference on Artificial Intelligence}, 2016.

\bibitem{chamon2017approximate}
L.~Chamon and A.~Ribeiro, ``Approximate supermodularity bounds for experimental
  design,'' in {\em Advances in Neural Information Processing Systems (NIPS)},
  pp.~5409--5418, 2017.

\bibitem{hashemi2019submodular}
A.~Hashemi, M.~Ghasemi, and H.~Vikalo, ``Submodular observation selection and
  information gathering for quadratic models,'' in {\em Proceedings of the 36th
  International Conference on Machine Learning}, vol.~97, 2019.

\bibitem{horel2016maximization}
T.~Horel and Y.~Singer, ``Maximization of approximately submodular functions,''
  in {\em Advances in Neural Information Processing Systems (NIPS)},
  pp.~3045--3053, 2016.

\bibitem{bian2017guarantees}
A.~A. Bian, J.~M. Buhmann, A.~Krause, and S.~Tschiatschek, ``Guarantees for
  greedy maximization of non-submodular functions with applications,'' in {\em
  International Conference on Machine Learning (ICML)}, pp.~498--507,
  Omnipress, 2017.

\bibitem{khanna2017scalable}
R.~Khanna, E.~Elenberg, A.~Dimakis, S.~Negahban, and J.~Ghosh, ``Scalable
  greedy feature selection via weak submodularity,'' in {\em Artificial
  Intelligence and Statistics}, pp.~1560--1568, 2017.

\bibitem{ghasemi2019submodularity}
M.~Ghasemi, A.~Hashemi, U.~Topcu, and H.~Vikalo, ``On submodularity of
  quadratic observation selection in constrained networked sensing systems,''
  in {\em 2019 American Control Conference (ACC)}, pp.~4671--4676, IEEE, 2019.

\bibitem{elhamifar2009sparse}
E.~Elhamifar and R.~Vidal, ``Sparse subspace clustering,'' in {\em Proceedings
  of the IEEE Conference on Computer Vision and Pattern Recognition (CVPR)},
  pp.~2790--2797, IEEE, 2009.

\bibitem{you2015sparse}
C.~You, D.~Robinson, and R.~Vidal, ``Scalable sparse subspace clustering by
  orthogonal matching pursuit,'' in {\em in Proceedings of the IEEE Conference
  on Computer Vision and Pattern Recognition (CVPR)}, pp.~3918--3927, 2016.

\bibitem{hashemi2018accelerated}
A.~Hashemi and H.~Vikalo, ``Accelerated orthogonal least-squares for
  large-scale sparse reconstruction,'' {\em Digital Signal Processing},
  vol.~82, pp.~91--105, 2018.

\bibitem{hashemi2018evolutionary}
A.~Hashemi and H.~Vikalo, ``Evolutionary self-expressive models for subspace
  clustering,'' {\em IEEE Journal of Selected Topics in Signal Processing},
  vol.~12, no.~6, pp.~1534--1546, 2018.

\bibitem{bhaskara2016greedy}
A.~Bhaskara, A.~Rostamizadeh, J.~Altschuler, M.~Zadimoghaddam, T.~Fu, and
  V.~Mirrokni, ``Greedy column subset selection: New bounds and distributed
  algorithms,'' in {\em International Conference on Machine Learning (ICML)},
  Omnipress, 2016.

\bibitem{ng2002spectral}
A.~Y. Ng, M.~I. Jordan, and Y.~Weiss, ``On spectral clustering: Analysis and an
  algorithm,'' in {\em Advances in neural information processing systems},
  pp.~849--856, 2002.

\bibitem{georghiades2001few}
A.~S. Georghiades, P.~N. Belhumeur, and D.~J. Kriegman, ``From few to many:
  {I}llumination cone models for face recognition under variable lighting and
  pose,'' {\em IEEE Trans. Pattern Anal. Mach. Intell.}, vol.~23, no.~6,
  pp.~643--660, Jun. 2001.

\bibitem{ghasemi2020task}
M.~Ghasemi, E.~Bulgur, and U.~Topcu, ``Task-oriented active perception and
  planning in environments with partially known semantics,'' in {\em
  International Conference on Machine Learning}, pp.~3484--3493, PMLR, 2020.

\bibitem{ghasemi2019online}
M.~Ghasemi and U.~Topcu, ``Online active perception for partially observable
  markov decision processes with limited budget,'' in {\em 2019 IEEE 58th
  Conference on Decision and Control (CDC)}, pp.~6169--6174, IEEE, 2019.

\bibitem{matlab}
MATLAB, ``Automated driving toolbox: Design, simulate, and test adas and
  autonomous driving systems.''
  \url{https://www.mathworks.com/help/driving/index.html?s_tid=CRUX_lftnav},
  2021.

\bibitem{granstrom2011tracking}
K.~Granstr{\"o}m, C.~Lundquist, and U.~Orguner, ``Tracking rectangular and
  elliptical extended targets using laser measurements,'' in {\em 14th
  International Conference on Information Fusion}, pp.~1--8, IEEE, 2011.

\bibitem{granstrom2012extended}
K.~Granstr{\"o}m, C.~Lundquist, F.~Gustafsson, and U.~Orguner, ``On extended
  target tracking using phd filters,'' {\em Unpublished technical report},
  2012.

\bibitem{granstrom2016extended}
K.~Granstrom, M.~Baum, and S.~Reuter, ``Extended object tracking: Introduction,
  overview and applications,'' {\em arXiv preprint arXiv:1604.00970}, 2016.

\bibitem{schuhmacher2008consistent}
D.~Schuhmacher, B.-T. Vo, and B.-N. Vo, ``A consistent metric for performance
  evaluation of multi-object filters,'' {\em IEEE transactions on signal
  processing}, vol.~56, no.~8, pp.~3447--3457, 2008.

\bibitem{ristic2011metric}
B.~Ristic, B.-N. Vo, D.~Clark, and B.-T. Vo, ``A metric for performance
  evaluation of multi-target tracking algorithms,'' {\em IEEE Transactions on
  Signal Processing}, vol.~59, no.~7, pp.~3452--3457, 2011.

\bibitem{sharma2015greedy}
D.~Sharma, A.~Kapoor, and A.~Deshpande, ``On greedy maximization of entropy,''
  in {\em International Conference on Machine Learning}, pp.~1330--1338, 2015.

\bibitem{chung2006concentration}
F.~Chung and L.~Lu, ``Concentration inequalities and martingale inequalities: a
  survey,'' {\em Internet Mathematics}, vol.~3, no.~1, pp.~79--127, 2006.

\bibitem{golub2012matrix}
G.~H. Golub and C.~F. Van~Loan, {\em Matrix computations}, vol.~3.
\newblock JHU Press, 2012.

\bibitem{davidson2001local}
K.~R. Davidson and S.~J. Szarek, ``Local operator theory, random matrices and
  banach spaces,'' {\em Handbook of the geometry of Banach spaces}, vol.~1,
  no.~317-366, p.~131, 2001.

\end{thebibliography}
\section*{Supplementary Material}
In this supplementary document, we demonstrate an application of our results to the problem of sparse support selection.

The goal of sparse reconstruction, or sparse support selection, is to reconstruct a sparse vector from a relatively small 
number of its linear measurements. In particular, we are given a linear measurement model
\begin{equation} \label{linm}
	{\y=\A\x+\e},
\end{equation}
where $\x \in\Rb^{m}$ is a $k$-sparse unknown vector, i.e., a vector with at most $k$ non-zero components, $\y \in\Rb^{n}$ denotes the vector of measurements, $\A \in\Rb^{n \times m}$ is the coefficient matrix assumed to be full rank, and ${\bf \e} \in\R^{n}$ denotes the additive measurement noise vector. For simplicity, we here focus on the case $\e = \mathbf{0}$ and $\A \sim \N\left(0,\frac{1}{n}\right)$. 
The search for a sparse approximation of $\x$ leads to the NP-hard cardinality-constrained least-squares problem
\begin{equation}  \label{sls}
	\begin{aligned}
		& \underset{\x}{\text{minimize}}
		\quad\|\y-\A\x\|^{2}_{2}
		\qquad \text{subject to} \quad
		\|\x\|_{0} \leq k.
	\end{aligned}
\end{equation}
One can readily reformulate \eqref{sls} as a subset selection task according to the following procedure. For a fixed subset 
$\S \subset [m]$ where $|\S|\leq n$, we can find an approximation to $\x$ via the least-squares solution
$\x_{LS} = \A_\S^\dagger \y$, where $\A_\S^\dagger=\left(\A_\S^{\top}\A_\S\right)^{-1}\A_\S^{\top}$ denotes the 
Moore-Penrose pseudo-inverse of $\A_\S$. Finding the optimal $k$-sparse vector $\x^\star$ is equivalent to identifying the 
support of $\x^\star$, i.e., determining the set of nonzero entries of $\x^\star$ which we denote by $\S^\star$. More formally,
\eqref{sls} is recast as
\begin{equation}  \label{sls-set}
	\begin{aligned}
		& \underset{\S}{\text{minimize}}
		\quad\|\y-\P(\S)\y\|^{2}_{2}
		\qquad \text{subject to} \quad
		|\S| \leq k,
	\end{aligned}
\end{equation}
where $\P(\S) = \A_\S\A_\S^\dagger$ is the projection operator onto the subspace spanned by the columns of $\A_\S$. 
Since $\|\y\|_2^2 = \|\y-\P(\S)\y\|^{2}_{2}+\|\P(\S)\y\|^{2}_{2}$, \eqref{sls-set} can equivalently be written as
\begin{equation}  \label{sls-sub}
	\begin{aligned}
		& \underset{\S}{\text{maximize}}
		\quad g(\S) := \|\P(\S)\y\|^{2}_{2}
		\qquad \text{subject to} \quad 
		|\S| \leq k.
	\end{aligned}
\end{equation}
Note that since $\A$ is full rank, it can be shown that \eqref{sls-sub} has a unique 
solution.

Theorem \ref{thm:sss} summarizes the  result of this supplementary. It states that \textsc{Psg} successfully recovers 
$k$-sparse $\x$ with high probability as long as the number of measurements is linear in $k$ (sparsity) and logarithmic in $\frac{m}{k}$, achieving 
the optimal sample complexity established by Candes and Tao \cite{candes2005decoding}.
\begin{theorem}\label{thm:sss}
	Let $\x \in \Rb^m$ be an arbitrary sparse vector with $k$ non-zero entries  and let $\A \in \Rb^{n\times m}$ denote a random matrix with entries drawn independently from ${\cal N}(0,1/n)$. Given noiseless measurements $\y=\A\x$, \textsc{Psg} with parameter $e^{-k}\leq \epsilon \leq e^{-\frac{k}{m}}$ finds a solution that satisfies
	\begin{equation}
		\Pr\left(\S_{psg} = \S^\star\right)\geq \left(1-\epsilon\right)^{k-\log\frac{1}{\epsilon}} 
		\left(1-c_1(\frac{m}{k})^{c_2}\exp(-c_3\frac{n}{k})\right),
	\end{equation}
	for some positive universal constants $c_1$, $c_2$, and $c_3$.
	Furthermore, assume that $m> k\sqrt{k}$ and
	\begin{equation}\label{eq:ssamcomp}
		n \geq \max\left(\frac{6}{C_1} k \log \frac{m}{k\sqrt[6]{4\beta}}\text{ }, \text{ }C_2k\right),
	\end{equation}
	where $0<\beta<1$, and $C_1$ and $C_2$ are positive constants independent of $\beta$, $n$, $m$, and 
	$k$. Then, \textsc{Psg} with parameter $\epsilon < \frac{\beta}{k}$ can exactly identify the optimal support subset $\S^\star$ with a probability of success exceeding $1-2\beta$.
\end{theorem}
\begin{proof}[Proof of Theorem \ref{thm:sss}]
	To prove the theorem it suffices to derive nontrivial  lower bounds on $\prod_{i = 0}^{k-1} p_{psg}^{(i)}$ and $\prod_{i = 0}^{k-1} q_{psg}^{(i)}$ (as per Lemma 1 in the main paper).
	
	Theorem 3 in the main paper provides a lower bound on  $\prod_{i=0}^{k-1} p_{psg}^{(i)}$.
	Therefore, it just remains to derive a nontrivial lower bound on $q_{psg}^{(i)}$ in order to show existence of a sufficient condition for the exact identification of $\S^\star$ and establish a lower bound on the probability of success of \textsc{Psg}. A lower bound on $q_{psg}^{(i)}$ can be obtained by considering the conditions under which the largest marginal gain of  elements in $\R_{psg}^{(i)}\cap \S^\star$ exceeds that in $\R_{psg}^{(i)}\backslash \S^\star$ for all $i= 0,\dots,k-1$, that is, 
	\begin{equation}\label{eq:sofcon}
		\max_{j\in \R_{psg}^{(i)}\backslash \S^\star} g_j(\S_{psg}^{(i)}) < \max_{j\in \R_{psg}^{(i)}\cap \S^\star}g_j(\S_{psg}^{(i)}),
	\end{equation}
	with high probability. In Lemma \ref{lem:sss1}, we show the sufficient condition defined in \eqref{eq:sofcon} holds with high probability for \textsc{Psg} applied to the problem of sparse support selection.
	\begin{lemma}\label{lem:sss1}
		Let $\x \in \Rb^m$ be an arbitrary sparse vector with $k < m$ non-zero entries and let $\A \in \Rb^{n\times m}$ denote a random matrix with entries drawn independently from ${\cal N}(0,1/n)$. Given noiseless measurements $\y=\A\x$, for \textsc{Psg} with parameter $e^{-k}\leq \epsilon \leq e^{-\frac{k}{m}}$ it holds that $\prod_{i = 0}^{k-1}q_{psg}^{(i)} \geq \tilde{q}_1\tilde{q}_2$ where 
		\begin{equation} \label{eq:probnonois}
			\begin{aligned}
				\tilde{q}_1&=\left(1-2\exp\left(-n (\frac{\gamma^2}{4}-\frac{\gamma^3}{6})\right)\right)^m-\exp(-\delta^2\frac{n}{2}), \mbox{ and }\\
				\tilde{q}_2&=\left(1-\exp\left(-\frac{1-\gamma}{1+\gamma}(1-\sqrt{\frac{k}{n}}-\delta)^2\frac{n}{2k}\right)\right)^{k(m-k)},
			\end{aligned}
		\end{equation}
		for any $0<\gamma<1$ and $\delta>0$.
	\end{lemma}
	Before proving Lemma \ref{lem:sss1}, we state four lemmas that are used in the proof.
	
	Lemma \ref{lem:23} (Lemma 3.1 in \cite{hashemi2018accelerated}) states that the Euclidean norm of a normally distributed vector is  concentrated around its expected value.
	\begin{lemma} \label{lem:23}
		Let $\a \in \Rb^{n}$ be a vector consisting of entries that are drawn independently from ${\cal N}(0,1/n)$.
		Then it holds that $\E[\|\a\|_2^2]=1$. Furthermore, one can show that
		\begin{equation}
			\Pr\left(1-\gamma<\|\a\|_2^2<1+\gamma\right)\geq 1-2e^{-nc_0(\gamma)},
		\end{equation}
		where $c_0(\gamma)=\frac{\gamma^2}{4}-\frac{\gamma^3}{6}$ for $0<\gamma <1$.
	\end{lemma}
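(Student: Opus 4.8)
The first claim is immediate: by linearity of expectation and independence, $\E[\|\a\|_2^2] = \sum_{i=1}^n \E[a_i^2] = n\cdot\frac{1}{n} = 1$, since each $a_i$ is zero-mean with variance $1/n$. The substance of the lemma is the concentration inequality, which I would prove by recognizing $n\|\a\|_2^2$ as a chi-squared random variable and applying a Chernoff bound to each tail separately. This is the standard chi-squared concentration argument, but I would carry it out explicitly so that the constant $c_0(\gamma)=\frac{\gamma^2}{4}-\frac{\gamma^3}{6}$ emerges exactly.

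The plan is as follows. Write $a_i = Z_i/\sqrt{n}$ with $Z_i$ i.i.d. $\mathcal{N}(0,1)$, so that $Y := n\|\a\|_2^2 = \sum_{i=1}^n Z_i^2$ follows a $\chi^2_n$ distribution. The target event $\{1-\gamma < \|\a\|_2^2 < 1+\gamma\}$ is then exactly $\{n(1-\gamma) < Y < n(1+\gamma)\}$, whose complement splits into an upper tail $\{Y \geq n(1+\gamma)\}$ and a lower tail $\{Y \leq n(1-\gamma)\}$. I would bound each tail by $e^{-nc_0(\gamma)}$ and combine via a union bound.

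For the upper tail I would use the moment generating function $\E[e^{tY}] = (1-2t)^{-n/2}$, valid for $t<1/2$, together with Markov's inequality applied to $e^{tY}$: for $t>0$, $\Pr(Y \geq n(1+\gamma)) \leq e^{-tn(1+\gamma)}(1-2t)^{-n/2}$. Optimizing over $t$ (the minimizer is $t = \frac{\gamma}{2(1+\gamma)}$) collapses the exponent to $\frac{n}{2}[\log(1+\gamma) - \gamma]$. For the lower tail I would apply the same device to $-Y$, using $\E[e^{-tY}] = (1+2t)^{-n/2}$; the optimal choice $t = \frac{\gamma}{2(1-\gamma)}$ yields the exponent $\frac{n}{2}[\gamma + \log(1-\gamma)]$.

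The remaining, and only genuinely analytic, step is to verify that each exponent is at most $-nc_0(\gamma)$ on $0<\gamma<1$. Using $\log(1+\gamma)-\gamma = -\frac{\gamma^2}{2}+\frac{\gamma^3}{3}-\frac{\gamma^4}{4}+\cdots$, whose tail beyond $\frac{\gamma^3}{3}$ is an alternating series summing to a negative quantity on $(0,1)$, I get $\log(1+\gamma)-\gamma \leq -\frac{\gamma^2}{2}+\frac{\gamma^3}{3}$, hence $\frac{n}{2}[\log(1+\gamma)-\gamma] \leq -n(\frac{\gamma^2}{4}-\frac{\gamma^3}{6}) = -nc_0(\gamma)$. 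For the lower tail, $\gamma + \log(1-\gamma) = -\frac{\gamma^2}{2}-\frac{\gamma^3}{3}-\cdots \leq -\frac{\gamma^2}{2}$, so $\frac{n}{2}[\gamma+\log(1-\gamma)] \leq -\frac{n\gamma^2}{4} \leq -nc_0(\gamma)$, since $c_0(\gamma)\leq\frac{\gamma^2}{4}$. A union bound over the two tails then gives $\Pr(1-\gamma < \|\a\|_2^2 < 1+\gamma) \geq 1 - 2e^{-nc_0(\gamma)}$. The main obstacle is purely bookkeeping, namely matching the two Chernoff exponents to the prescribed constant $c_0(\gamma)$ through these logarithmic inequalities, rather than any structural difficulty.
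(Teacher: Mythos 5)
Your proof is correct, and it is essentially the same argument as the one behind this lemma: the paper itself offers no proof, citing Lemma 3.1 of \cite{hashemi2018accelerated}, which is the standard $\chi^2_n$ Chernoff bound (in the Achlioptas/Dasgupta--Gupta style) whose exponent $\frac{n}{2}\left[\log(1+\gamma)-\gamma\right]$, respectively $\frac{n}{2}\left[\gamma+\log(1-\gamma)\right]$, is dominated by $-nc_0(\gamma)$ via exactly the logarithm expansions you use. Your optimization of $t$, the alternating-series bound $\log(1+\gamma)-\gamma\leq-\frac{\gamma^2}{2}+\frac{\gamma^3}{3}$, and the final union bound all check out, so the constant $c_0(\gamma)=\frac{\gamma^2}{4}-\frac{\gamma^3}{6}$ emerges exactly as claimed.
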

	Lemma \ref{lem:nois} (Corollary 2.4.5 in \cite{golub2012matrix}) states inequalities between the 
	maximum and minimum singular values of a matrix and its submatrices.
	\begin{lemma}\label{lem:nois} 
		Let $\C$ be a full rank tall matrix and let $\A$ be a submatrix of $\C$. Then
		\begin{equation} \label{eq:sigA}
			\begin{aligned} 
				&\sigma_{\min}\left(\A\right)\geq \sigma_{\min}\left(\C\right), &\sigma_{\max}\left(\A\right)\leq \sigma_{\max}\left(\C\right).
			\end{aligned}
		\end{equation}
	\end{lemma}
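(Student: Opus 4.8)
The plan is to prove both singular value inequalities directly from the Courant--Fischer (Rayleigh quotient) variational characterization of the extreme singular values, exploiting the fact that---consistent with the notation of the paper---$\A$ is a \emph{column} submatrix of $\C$, i.e. $\A = \C_{\S}$ for some index set $\S \subseteq [m]$. Since $\C$ is tall and full rank, every subset of its columns is linearly independent, so $\A$ also has full column rank; hence $\sigma_{\min}(\A) > 0$ and both $\sigma_{\min}(\A) = \min_{x \neq 0} \|\A x\|_2 / \|x\|_2$ and $\sigma_{\max}(\A) = \max_{x \neq 0}\|\A x\|_2/\|x\|_2$ are genuine extreme values of the Rayleigh quotient, which is the only characterization I will need.

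First I would set up a zero-padding correspondence. For $x \in \Rb^{|\S|}$ let $\tilde{x} \in \Rb^{m}$ be the vector obtained by placing the entries of $x$ in the coordinates indexed by $\S$ and zeros elsewhere. By construction $\A x = \C \tilde{x}$ and $\|\tilde{x}\|_2 = \|x\|_2$, so $x \mapsto \tilde{x}$ is a norm-preserving bijection onto the coordinate subspace $V = \mathrm{span}\{\mathbf{e}_j : j \in \S\} \subseteq \Rb^m$. Consequently the Rayleigh quotient of $\A$ over all nonzero $x$ coincides with the Rayleigh quotient of $\C$ restricted to $V \setminus \{0\}$.

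The two inequalities then follow by comparing an optimization over the subspace $V$ with the same optimization over all of $\Rb^m$. For the maximum, $\sigma_{\max}(\A) = \max_{\tilde{x} \in V \setminus \{0\}} \|\C \tilde{x}\|_2/\|\tilde{x}\|_2 \leq \max_{y \in \Rb^m \setminus \{0\}} \|\C y\|_2/\|y\|_2 = \sigma_{\max}(\C)$, because enlarging the feasible set can only raise a maximum. For the minimum the same restriction runs the other way: $\sigma_{\min}(\A) = \min_{\tilde{x} \in V \setminus \{0\}} \|\C \tilde{x}\|_2/\|\tilde{x}\|_2 \geq \min_{y \in \Rb^m \setminus \{0\}} \|\C y\|_2/\|y\|_2 = \sigma_{\min}(\C)$, since a minimum over a smaller set is at least the minimum over the larger one. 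This yields exactly $\sigma_{\max}(\A) \leq \sigma_{\max}(\C)$ and $\sigma_{\min}(\A) \geq \sigma_{\min}(\C)$.

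There is no serious analytic obstacle here; the result is essentially the singular value interlacing property applied one deleted column at a time, and one could alternatively cite that theorem and induct. The only point requiring care is the \emph{direction} of the two bounds and the reason they point oppositely: both stem from the single subset relation $V \subseteq \Rb^m$, yet passing to a subset lowers a max while it raises a min. I would also flag the implicit convention that ``submatrix'' means a \emph{column} submatrix---the statement fails for row deletion, where the smallest singular value can only decrease---so this convention, which matches the paper's $\A_{\S}$ notation, is precisely what makes both inequalities hold simultaneously.
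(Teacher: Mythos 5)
Your proof is correct, but it takes a genuinely different route from the paper, which offers no proof at all: Lemma~\ref{lem:nois} is imported wholesale as Corollary~2.4.5 of Golub and Van Loan, i.e.\ the singular-value interlacing property under column deletion, applied (as you anticipate in your closing remark) one column at a time. Your zero-padding argument together with the variational characterizations $\sigma_{\max}(\A)=\max_{x\neq 0}\|\A x\|_2/\|x\|_2$ and $\sigma_{\min}(\A)=\min_{x\neq 0}\|\A x\|_2/\|x\|_2$ for tall matrices is a self-contained and elementary derivation of exactly the two extreme-value bounds the paper needs, at the cost of saying nothing about the interior singular values; the citation route buys the full interlacing chain, which is strictly stronger than what is used. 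Your flag about the submatrix convention is on point and is precisely the condition under which the lemma is applied in the paper: there $\A_{\R_{psg}^{(i)}\cap(\S^\star\backslash\S_{psg}^{(i)})}$ is a \emph{column} submatrix of $\A_{\S^\star}$, and for row deletion $\sigma_{\min}$ can only decrease, so the first inequality would indeed fail. One minor refinement: full column rank of $\A$ is not actually needed for either inequality---the variational formulas for a tall matrix hold regardless of rank, with full rank only guaranteeing $\sigma_{\min}(\A)>0$---though invoking it, as you do, is harmless since any column subset of a full-rank tall $\C$ is linearly independent.
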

	Lemma \ref{lem:davidson} from \cite{davidson2001local} establishes a probabilistic bound on the smallest singular value of a normally distributed matrix.
	\begin{lemma}\label{lem:davidson}
		Let $\A \in \Rb^{n\times k}$ denote a tall matrix whose entries  
		are drawn independently from ${\cal N}(0,1/n)$.
		Then for any $\delta>0$ it holds that
		\begin{equation}
			\Pr(\sigma_{\min}\left(\A\right)\geq 1- \sqrt{\frac{k}{n}}-\delta)\geq 1-\exp(-\delta^2\frac{n}{2}),
		\end{equation}
		and
		\begin{equation}
			\Pr(\sigma_{\max}\left(\A\right)\geq 1+ \sqrt{\frac{k}{n}}+\delta)\geq 1-\exp(-\delta^2\frac{n}{2}).
		\end{equation}
	\end{lemma}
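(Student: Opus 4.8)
The plan is to reduce the statement to the classical Davidson--Szarek bound on the extreme singular values of a standard Gaussian matrix and then rescale to our normalization. Since the entries of $\A$ are $\N(0,1/n)$, the matrix $\mathbf{G}:=\sqrt{n}\,\A$ has i.i.d.\ $\N(0,1)$ entries and $\sigma_{\min}(\A)=\sigma_{\min}(\mathbf{G})/\sqrt{n}$. Thus the target inequality is equivalent to a lower-tail bound for $\sigma_{\min}(\mathbf{G})$ centered at $\sqrt{n}-\sqrt{k}$: it suffices to prove
\[
\Pr\!\left(\sigma_{\min}(\mathbf{G})\leq \sqrt{n}-\sqrt{k}-t\right)\leq \exp\!\left(-\tfrac{t^2}{2}\right)
\]
for all $t>0$, after which setting $t=\delta\sqrt{n}$ and dividing through by $\sqrt{n}$ yields the claim upon taking complements.

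The argument rests on two ingredients. \textbf{(1) Lower bound on the expectation.} I would show $\E[\sigma_{\min}(\mathbf{G})]\geq \sqrt{n}-\sqrt{k}$ via Gordon's Gaussian min--max comparison inequality. Writing
\[
\sigma_{\min}(\mathbf{G})=\min_{\u\in S^{k-1}}\norm{\mathbf{G}\u}_2=\min_{\u\in S^{k-1}}\max_{\mathbf{v}\in S^{n-1}}\mathbf{v}^\top \mathbf{G}\,\u,
\]
one compares the Gaussian process $(\u,\mathbf{v})\mapsto \mathbf{v}^\top\mathbf{G}\u$ to the decoupled process $(\u,\mathbf{v})\mapsto \mathbf{g}^\top\mathbf{v}+\mathbf{h}^\top\u$, where $\mathbf{g}\in\Rb^{n}$ and $\mathbf{h}\in\Rb^{k}$ have i.i.d.\ $\N(0,1)$ entries. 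Verifying the covariance hypotheses of Gordon's theorem and then evaluating the comparison process gives
\[
\E[\sigma_{\min}(\mathbf{G})]\geq \E\norm{\mathbf{g}}_2-\E\norm{\mathbf{h}}_2,
\]
and $\E\norm{\mathbf{g}}_2-\E\norm{\mathbf{h}}_2\geq \sqrt{n}-\sqrt{k}$ follows from standard estimates of the mean of a chi-distributed variable (the map $m\mapsto \sqrt{m}-\E\norm{\cdot}_2$ on $\Rb^m$ being non-increasing, together with $\E\norm{\mathbf{h}}_2\leq\sqrt{k}$ by Jensen). \textbf{(2) Gaussian concentration.} The map $\mathbf{G}\mapsto\sigma_{\min}(\mathbf{G})$ is $1$-Lipschitz in the Frobenius norm, because $|\sigma_{\min}(\mathbf{G}_1)-\sigma_{\min}(\mathbf{G}_2)|\leq \norm{\mathbf{G}_1-\mathbf{G}_2}_2\leq\norm{\mathbf{G}_1-\mathbf{G}_2}_F$ by Weyl's perturbation inequality for singular values. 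Identifying the $nk$ entries of $\mathbf{G}$ with a standard Gaussian vector (whose Euclidean norm is exactly the Frobenius norm), the Gaussian concentration inequality for Lipschitz functions yields $\Pr(\sigma_{\min}(\mathbf{G})\leq \E[\sigma_{\min}(\mathbf{G})]-t)\leq\exp(-t^2/2)$.

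Combining the two ingredients gives the displayed lower-tail bound, and the rescaling $t=\delta\sqrt{n}$ completes the proof. The main obstacle is ingredient (1): the expectation lower bound hinges on Gordon's comparison inequality and on the slightly delicate fact that the gap $\E\norm{\mathbf{g}}_2-\E\norm{\mathbf{h}}_2$ stays above $\sqrt{n}-\sqrt{k}$; the Lipschitz estimate and the Gaussian concentration step in (2) are routine.
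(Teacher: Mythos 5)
Your proposal is correct, and it is in essence the paper's own ``proof'': the paper does not prove this lemma but imports it verbatim from Davidson and Szarek \cite{davidson2001local}, whose argument is exactly the two-step scheme you describe --- Gordon's min--max comparison to get $\E[\sigma_{\min}(\mathbf{G})]\geq \E\|\mathbf{g}\|_2-\E\|\mathbf{h}\|_2\geq\sqrt{n}-\sqrt{k}$, followed by one-sided Gaussian (Borell--TIS) concentration for the $1$-Lipschitz map $\mathbf{G}\mapsto\sigma_{\min}(\mathbf{G})$, and the rescaling $t=\delta\sqrt{n}$. The one point you flag as delicate, the monotonicity of $m\mapsto\sqrt{m}-\E\|\mathbf{g}\|_2$, is a known fact (it follows, e.g., from the identity $a_m a_{m+1}=m$ for $a_m=\E\|\mathbf{g}\|_2$ with $\mathbf{g}\in\Rb^m$ standard Gaussian), so no gap remains.
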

	Lemma \ref{lem:tropp} (Proposition 4 in \cite{tropp2007signal}) establishes an upper bound on the inner product of two independent random vectors.
	\begin{lemma}\label{lem:tropp}
		Let $\a \in \Rb^{n}$ denote a vector with entries that 
		are drawn independently from ${\cal N}(0,1/n)$. Let $\u\in \Rb^{n}$ be a random vector such that $\|\u\|_2=1$ and let $\u$ and $\a$ be statistically independent. Then for $\delta>0$ it holds
		\begin{equation}
			\Pr(|\a^\top\u|\leq \delta)\geq 1-\exp(-\delta^2\frac{n}{2}).
		\end{equation}
	\end{lemma}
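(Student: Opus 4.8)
The plan is to condition on $\u$ and use independence to collapse the claim onto a one-dimensional Gaussian tail bound. Since $\u$ and $\a$ are statistically independent, I would fix an arbitrary realization $\u = u$ with $\|u\|_2 = 1$ and treat $\a^\top u = \sum_{i=1}^n a_i u_i$ as a function of the random vector $\a$ alone. Each $a_i$ is an independent ${\cal N}(0,1/n)$ variable, so $\a^\top u$ is a linear combination of independent centered Gaussians and is therefore itself Gaussian with mean $0$ and variance $\sum_{i=1}^n u_i^2\cdot\tfrac{1}{n} = \tfrac{1}{n}\|u\|_2^2 = \tfrac{1}{n}$. The crucial structural point is that this conditional law, ${\cal N}(0,1/n)$, is the \emph{same} for every unit vector $u$; this rotational invariance is exactly what will allow the conditioning to disappear at the very end.

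Next I would rescale to a standard normal. Writing $W := \sqrt{n}\,\a^\top u$, the computation above gives $W \sim {\cal N}(0,1)$ conditionally on $\u = u$, and hence
\[
\Pr\left(|\a^\top u| > \delta \ \middle|\ \u = u\right) = \Pr\left(|W| > \sqrt{n}\,\delta\right).
\]
Thus it suffices to establish the two-sided standard Gaussian tail bound $\Pr(|W| \ge t) \le e^{-t^2/2}$ for all $t \ge 0$ and apply it with $t = \sqrt{n}\,\delta$, which yields $\Pr(|\a^\top u| > \delta \mid \u = u) \le e^{-\delta^2 n/2}$ for every unit vector $u$.

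The main obstacle is the constant. The naive route --- a Chernoff bound $\Pr(W \ge t) \le \inf_{s>0} e^{-st}\,\E[e^{sW}] = e^{-t^2/2}$ applied to each tail and then combined by a union bound --- produces $2e^{-t^2/2}$, a factor of two larger than the claimed bound. To remove this factor I would prove the sharp inequality $2Q(t) \le e^{-t^2/2}$ directly, where $Q(t) := \Pr(W \ge t)$. Setting $f(t) := 2e^{t^2/2}Q(t)$ gives $f(0) = 1$ and, using $Q'(t) = -\tfrac{1}{\sqrt{2\pi}}e^{-t^2/2}$,
\[
f'(t) = 2\left(t\,e^{t^2/2}Q(t) - \tfrac{1}{\sqrt{2\pi}}\right).
\]
The standard Mills-ratio bound $Q(t) \le \tfrac{1}{t\sqrt{2\pi}}e^{-t^2/2}$ for $t > 0$ gives precisely $t\,e^{t^2/2}Q(t) \le \tfrac{1}{\sqrt{2\pi}}$, so $f'(t) \le 0$ and therefore $f(t) \le f(0) = 1$; equivalently $\Pr(|W| \ge t) = 2Q(t) \le e^{-t^2/2}$, which is the sharp bound needed.

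Finally I would undo the conditioning. Because the conditional estimate $\Pr(|\a^\top\u| > \delta \mid \u = u) \le e^{-\delta^2 n/2}$ holds uniformly over all unit vectors $u$, the tower property gives
\[
\Pr(|\a^\top\u| > \delta) = \E_{\u}\left[\Pr(|\a^\top\u| > \delta \mid \u)\right] \le e^{-\delta^2 n/2},
\]
and passing to the complementary event yields the stated bound $\Pr(|\a^\top\u| \le \delta) \ge 1 - e^{-\delta^2 n/2}$.
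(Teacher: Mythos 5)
Your proof is correct and self-contained. The paper itself gives no proof of this lemma---it imports it verbatim as Proposition~4 of \cite{tropp2007signal}---and your argument is precisely the standard route behind the cited result: independence lets you condition on $\u=u$ so that $\a^\top u\sim{\cal N}(0,1/n)$ uniformly over unit vectors $u$, and the tower property then reduces everything to the sharp two-sided tail bound $\Pr(|W|\ge t)\le e^{-t^2/2}$ for $W\sim{\cal N}(0,1)$, which you rightly note cannot be obtained from a naive Chernoff-plus-union-bound (that gives $2e^{-t^2/2}$) and instead establish via the Mills-ratio monotonicity argument, exactly matching the stated constant.
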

	We are now ready to proceed with the proof of Lemma \ref{lem:sss1}. Let $\r_i := (\I_n-\P(\S^{(i)}_{psg}))\y$ be the {\it residual} vector in the $i\ts{th}$ iteration of \textsc{Psg}. Note that if in the previous iterations \textsc{Psg} selected columns of $\A$ with indices from $\S^\star$, the selected columns are orthogonal to $\r_i$.
	
	To prove the stated result it is sufficient to establish a lower bound on the probability of \eqref{eq:sofcon}. Given the selection criterion of \textsc{Omp} for sparse support selection, it is straightforward to see that
	\begin{equation}
		\rho(\r_i) :=\max_{j\in \R_{psg}^{(i)}\backslash \S^\star} \frac{|\a_j^\top\r_i|}{\|\a_j\|_2}\Big\slash \max_{j\in \R_{psg}^{(i)}\cap (\S^\star\backslash \S^{(i)}_{psg})} \frac{|\a_j^\top\r_i|}{\|\a_j\|_2}<1
	\end{equation}
	is a sufficient condition for successful identification of an element from $\R_{psg}^{(i)}\cap (\S^\star\backslash \S^{(i)}_{psg})$. Our goal in this theorem is to prove that with high probability $\rho(\r_i) <1$ in each iteration $i$. This in turn will establish a lower bound on $q_{psg}^{(i)}$, $i = 0,\dots,k-1$. To this end, following \cite{tropp2007signal} we employ an induction technique to show that $\rho(\r_i) <1$ if $\R_{psg}^{(i)} \cap (\S^\star\backslash\S_{psg}^{(i)}) \neq \emptyset$ and $\S_{psg}^{(i)} \subseteq \S^\star$. Since computing $\rho(\r_i)$ appears challenging, to establish the desired results we show that a judicious upper bound on $\rho(\r_i)$ is with overwhelming probability smaller than $1$. In particular, note that one may upper bound $\rho(\r_i)$ as
	\begin{equation}\label{eq:rho1}
		\begin{aligned}
			\rho(\r_i) &\leq \frac{\max_{j\in \R_{psg}^{(i)}\backslash \S^\star}|\a_j^\top\r_i|}{\max_{j\in \R_{psg}^{(i)}\cap (\S^\star\backslash \S^{(i)}_{psg})}|\a_j^\top\r_i|} &\\\cdot \frac{\max_{j\in \R_{psg}^{(i)}\cap (\S^\star\backslash \S^{(i)}_{psg})}\|\a_j\|_2}{\min_{j\in \R_{psg}^{(i)}\backslash \S^\star}\|\a_j\|_2},\\
			&\leq \frac{\max_{j\in \R_{psg}^{(i)}\backslash \S^\star}|\a_j^\top\r_i|}{\max_{j\in \R_{psg}^{(i)}\cap (\S^\star\backslash \S^{(i)}_{psg})}|\a_j^\top\r_i|}\cdot \frac{\max_{j\in [m]}\|\a_j\|_2}{\min_{j\in [m]}\|\a_j\|_2}.
		\end{aligned}
	\end{equation}
	Let $\Z_1$ denote the event that
	\begin{equation}\label{eq:Z1}
		\frac{\max_{j\in [m]}\|\a_j\|_2}{\min_{j\in [m]}\|\a_j\|_2} \leq \sqrt{\frac{1+\gamma}{1-\gamma}}
	\end{equation}
	for some $\gamma \in (0,1)$. Then, from Lemma \ref{lem:23} it follows that
	\begin{equation}
		\Pr(\Z_1)\geq\left(1-2e^{-c_0(\gamma)n}\right)^m.
	\end{equation}
	In other words, since $\|\a_j\|_2$'s are highly concentrated around one, one can approximate \eqref{eq:rho1} by disregarding the second factor on the right-hand side. Additionally, let $\Z_2$ denote the event that $\sigma_{\min}(\A_{\S^\star}) \geq 1-\sqrt{\frac{k}{n}}-\delta$ for some $\delta>0$. Then, from Lemma \ref{lem:davidson} we have
	\begin{equation}
		\Pr(\Z_2)\geq1-\exp(-\delta^2\frac{n}{2}).
	\end{equation}
	Therefore, by conditioning
	\begin{equation}
		\begin{aligned}
			\Pr\left(\rho(\r_i) <1\right) \geq \Pr\left(\rho(\r_i) <1\text{ }| \text{ }\Z_1\cap \Z_2\right) \Pr(\Z_1\cap \Z_2).
		\end{aligned}
	\end{equation}
	Note that occurrence of $\Z_1$ and $\Z_2$ in the $i = 0$ iteration implies $\Z_1$ and $\Z_1$ occur throughout the algorithm. Thus, $\Z_1$ and $\Z_2$ are in a sense {\it global} events. Note that $\Pr(\Z_1\cap \Z_2)$ can be bounded according to
	\begin{equation}
		\begin{aligned}
			\Pr(\Z_1\cap \Z_2) &= \Pr(\Z_1)+\Pr( \Z_2)-\Pr(\Z_1\cup \Z_2),\\
			&\geq \Pr(\Z_1)+\Pr( \Z_2)-1,\\
			&\geq\left(1-2e^{-c_0(\gamma)n}\right)^m-\exp(-\delta^2\frac{n}{2}):=\tilde{q}_1.
		\end{aligned}
	\end{equation}
	
	Now, note that $\max_{j\in \R_{psg}^{(i)}\cap (\S^\star\backslash \S^{(i)}_{psg})}|\a_j^\top\r_i|$ can alternatively be written as an $\ell_\infty$-norm of its argument. Furthermore, since $|\R_{psg}^{(i)}\cap (\S^\star\backslash \S^{(i)}_{psg})|\leq |\S^\star| \leq k$, there are at most $k$ inner products $|\a_j^\top\r_i|$ to consider (i.e., $1 \le j \le k$). Finally, since for a $k$-dimensional vector $\a$ holds that $\sqrt{k}\|\a\|_\infty \geq \|\a\|_2$, by conditioning on $\Z_1\cap \Z_2$ we have	
	\begin{equation}\label{eq:40app}
		\begin{aligned}
			\rho(\r_i) &\leq \sqrt{k}\frac{\max_{j\in \R_{psg}^{(i)}\backslash \S^\star}|\a_j^\top\r_i|}{\| \A_{\R_{psg}^{(i)}\cap (\S^\star\backslash \S^{(i)}_{psg})}^\top\r_i\|_2} \sqrt{\frac{1+\gamma}{1-\gamma}},\\
			&=\frac{\sqrt{k}}{c_1(\gamma)}\max_{j\in \R_{psg}^{(i)}\backslash \S^\star}|\a_j^\top\tilde{\r}_i|,
		\end{aligned}
	\end{equation}
	where $c_1(\gamma)=\sqrt{\frac{1-\gamma}{1+\gamma}}$ and $\tilde{\r}_i=\r_i\slash \| \A_{\R_{psg}^{(i)}\cap (\S^\star\backslash \S^{(i)}_{psg})}^\top\r_i\|_2$. Note that $\tilde{\r}_i$ is introduced in part to help us apply the concentration results established by Lemma \ref{lem:tropp}. Since $\A_{\R_{psg}^{(i)}\cap (\S^\star\backslash \S^{(i)}_{psg})}$ is a submatrix of $\A_{\S^\star}$, by conditioning on $\Z_1\cap \Z_2$, properties of singular values, and Lemma \ref{lem:nois} we obtain
	\begin{equation}\label{eq:Z2}
		\begin{aligned}
			\|\tilde{\r}_i\|_2 &= \frac{\|\r_i \|_2}{\| \A_{\R_{psg}^{(i)}\cap (\S^\star\backslash \S^{(i)}_{psg})}^\top\r_i\|_2}, \\
			&\leq \frac{1}{\sigma_{\min}(_{\R_{psg}^{(i)}\cap (\S^\star\backslash \S^{(i)}_{psg})})},\\ &\leq\frac{1}{\sigma_{\min}(\A_{\S^\star})}, \\
			&\leq \frac{1}{1-\sqrt{\frac{k}{n}}-\delta}.
		\end{aligned}
	\end{equation}
	By defining $\bar{\r}_i = \sigma_{\min}(\A_{\S^\star})\tilde{\r}_i$, $\|\bar{\r}_i\|_2 = 1$, conditioning on $\Z_1\cap \Z_2$ \eqref{eq:40app} can be written as
	\begin{equation}
		\begin{aligned}
			\rho(\r_i) &\leq\frac{\sqrt{k}}{c_1(\gamma)(1-\sqrt{\frac{k}{n}}-\delta)}\max_{j\in \R_{psg}^{(i)}\backslash \S^\star}|\a_j^\top\bar{\r}_i|\\ &\leq\frac{\sqrt{k}}{c_1(\gamma)(1-\sqrt{\frac{k}{n}}-\delta)}\max_{j\in [m]\backslash \S^\star}|\a_j^\top\bar{\r}_i|
		\end{aligned}
	\end{equation}
	Thus, conditioning on $\Z_1$ and $\Z_2$
	\begin{equation}
		\max_{j\in [m]\backslash \S^\star}|\a_j^\top\bar{\r}_i|<\frac{c_1(\gamma)(1-\sqrt{\frac{k}{n}}-\delta)}{\sqrt{k}}
	\end{equation}
	is a sufficient condition for successful identification of an element from $\R_{psg}^{(i)}\cap (\S^\star\backslash \S^{(i)}_{psg})$.
	Note that since by the hypothesis of the inductive argument $\R_{psg}^{(i)} \cap (\S^\star\backslash\S_{psg}^{(i)}) \neq \emptyset$ and $\S_{psg}^{(i)} \subseteq \S^\star$ hold, $\bar{\r}_i$ is in the span of $\A_{\S^\star}$, and subsequently $\bar{\r}_i$ and $\a_j$'s are statistically independent for all $j\in [m]\backslash \S^\star$. Therefore, by Lemma \ref{lem:tropp} and the fact that $\a_j$'s are i.i.d. normal random vectors
	\begin{equation}\label{eq:counterpart}
		\begin{aligned}
			&\Pr\left(\max_{j\in [m]\backslash \S^\star}|\a_j^\top\bar{\r}_i| <\frac{c_1(\gamma)(1-\sqrt{\frac{k}{n}}-\delta)}{\sqrt{k}}\right)\\ & = \Pr\left(|\a_1^\top\bar{\r}_i| <\frac{c_1(\gamma)(1-\sqrt{\frac{k}{n}}-\delta)}{\sqrt{k}}\right)^{(m-k)} \\
			&\geq \left(1-\exp\left(-c_1(\gamma)^2(1-\sqrt{\frac{k}{n}}-\delta)^2\frac{n}{2k}\right)\right)^{(m-k)}\\
			&:=\tilde{q}_2^{\frac{1}{k}}
		\end{aligned}
	\end{equation}
	Finally, noting $\prod_{i=0}^{k-1} q_{psg}^{(i)} \geq \tilde{q}_1\prod_{i=0}^{k-1}\tilde{q}_2^{\frac{1}{k}} = \tilde{q}_1\tilde{q}_2$ establishes the stated results.
	
	We now proceed with the reminder of proof of Theorem 1. Let us take a closer look to $\tilde{q}_1$. We may bound $\tilde{q}_1$ using the inequality $(1-x)^l\geq 1-lx$, valid for $x\leq 1$ and $l\geq 1$ according to
	\begin{equation}
		\tilde{q}_1 \geq 1-2m\exp\left(-n (\frac{\gamma^2}{4}-\frac{\gamma^3}{6})\right)-\exp(-\delta^2\frac{n}{2}).
	\end{equation}
	Since our goal is to show the optimal sample complexity is achieved by \textsc{Psg}, comparing $\tilde{q}_1$ and $\tilde{q}_2$ we can conclude $\tilde{q}_1$ can be easily excluded from our numerical approximations as the exponent in $\tilde{q}_1$ increases linearly with $n$ while exponent in $\tilde{q}_2$ increases fairly more slowly. Alternatively, we can multiply $\tilde{q}_1$ and $\tilde{q}_2$, and by discarding positive higher order terms achieve the same conclusion.
	
	Now, lets turn our attention towards the lower bound on $\prod_{i = 0}^{k-1}p_{psg}^{(i)}$. From Theorem 3 of the main paper,
	\begin{equation}
		\prod_{i = 0}^{k-1}p_{psg}^{(i)}\geq (1-\epsilon)^k \geq 1-k\epsilon.
	\end{equation}
	Next, we find a simple lower bound on $\tilde{q}_2$. Assume, $(1-\sqrt{\frac{k}{n}}-\delta)^2 \geq 1-c$ for some $c>0$. Then, it holds that $n\geq C_2 k $, where $C_2 := (1-\sqrt{1-c}+\delta)^{-2}$. Thus, employing $(1-x)^l\geq 1-lx$ once again yields
	\begin{equation}
		\tilde{q}_2  \geq  1-k(m-k)\exp\left(-\frac{1-\gamma}{1+\gamma}(1-c)\frac{n}{2k}\right).  
	\end{equation}
	Let $C_1 := \frac{1-\gamma}{1+\gamma}\frac{1-c}{2}$. Given that $k(m-k)\leq\frac{1}{4}(\frac{m}{k})^6$ for $m>k\sqrt{k}$, we obtain
	\begin{equation}
		\tilde{q}_2  \geq  1-\frac{1}{4}(\frac{m}{k})^6\exp\left(-C_1\frac{n}{k}\right).  
	\end{equation}
	Now, since $(1-\beta)^2 \geq 1-2\beta$, in order to establish $\Pr\left(\S_{psg}^{(k)} = \S^\star\right) \geq 1-2\beta$, it suffices to show
	\begin{equation}
		1-k\epsilon >1-\beta, \text{ and}\qquad 1-\frac{1}{4}(\frac{m}{k})^6\exp\left(-C_1\frac{n}{k}\right) >1-\beta.
	\end{equation}
	Therefore, the condition on $\epsilon$, i.e., $\epsilon < \frac{\beta}{k}$, and the results emerge by rearranging the above inequalities.
\end{proof}
\end{document}